\lstdefinelanguage[x86]{Assembler}{
	morekeywords={call, push, ret, mov, inc, je, inc, cmp},
	morecomment=[l]{\%}
}
\def\Xrightarrow[#1]{\xrightarrow{#1}\!\!^*}
\def\XRightarrow[#1]{\xRightarrow{#1}\!\!^*}
\title{Reachability Analysis of Pushdown Systems with an Upper Stack}
\author{Adrien Pommellet\inst{1} \and Marcio Diaz\inst{1} \and Tayssir Touili\inst{2}}
\institute{LIPN and Université Paris-Diderot, France
	\and LIPN, CNRS, and Université Paris 13, France}
\begin{document}

	\maketitle
	\setcounter{footnote}{0}
	\let\thefootnote\relax\footnotetext{This work was partially funded by the FUI project Freenivi.}

	\begin{abstract}
		Pushdown systems (PDSs) are a natural model for sequential programs, but they can fail to accurately represent the way an assembly stack actually operates. Indeed, one may want to access the part of the memory that is below the current stack or base pointer, hence the need for a model that keeps track of this part of the memory. To this end, we introduce pushdown systems with an upper stack (UPDSs), an extension of PDSs where symbols popped from the stack are not destroyed but instead remain just above its top, and may be overwritten by later push rules. We prove that the sets of successors $post^*$ and predecessors $pre^*$ of a regular set of configurations of such a system are not always regular, but that $post^*$ is context-sensitive, so that we can decide whether a single configuration is forward reachable or not. In order to under-approximate $pre^*$ in a regular fashion, we consider a bounded-phase analysis of UPDSs, where a phase is a part of a run during which either push or pop rules are forbidden. We then present a method to over-approximate $post^*$ that relies on regular abstractions of runs of UPDSs. Finally, we show how these approximations can be used to detect stack overflows and stack pointer manipulations with malicious intent.
		
		\keywords{pushdown systems, reachability analysis, stack pointer, finite automata}
	\end{abstract}

	\section{Introduction}
	
	Pushdown systems were introduced to accurately model the \emph{call stack} of a program. A \emph{call stack} is a stack data structure that stores information about the active procedures of a program such as return addresses, passed parameters and local variables. It is usually implemented using a \emph{stack pointer (sp)} register that indicates the head of the stack. Thus, assuming the stack grows downwards, when data is \emph{pushed} onto the stack, \emph{sp} is decremented before the item is placed on the stack. For instance, in $x86$ architecture \emph{sp} is decremented by $4$ (pushing $4$ bytes). When data is \emph{popped} from the stack, \emph{sp} is incremented. For instance, in $x86$ architecture \emph{sp} is incremented by $4$ (popping $4$ bytes).
	
	However, in a PDS, neither push nor pop rules are truthful to the assembly stack. During an actual pop operation on the stack, the item remains in memory and the stack pointer is increased, as shown in Figures \ref{fig:3-1a} and \ref{fig:3-1b}, whereas a PDS deletes the item on the top of the stack, as shown in Figures \ref{fig:3-1c} and \ref{fig:3-1d}.
	
	\begin{figure}
		\centering
		\begin{minipage}{.49\linewidth}
			\centering
			\begin{tikzpicture}
			\edef\sizetape{0.8cm}
			\tikzstyle{tmtape}=[draw,minimum size=\sizetape]
			\tikzstyle{tmhead}=[arrow box,draw,minimum size=.45cm,arrow box
			arrows={north:0.25cm}]
			
			\begin{scope}[start chain=1 going right,node distance=0mm, outer sep=0mm]
			\node [on chain=1,tmtape,draw=none]{$\ldots$};
			\node [on chain=1,tmtape,pattern=north west lines, pattern color=gray]{$a$};
			\node [on chain=1,tmtape](input){$b$};
			\node [on chain=1,tmtape]{$c$};
			\node [on chain=1,tmtape]{$d$};
			\node [on chain=1,tmtape,draw=none] {$\ldots$};
			\end{scope}
			
			\node [tmhead,yshift=-0.55cm] at (input.south){\textsf{sp}};
			\end{tikzpicture}
			\captionof{figure}{The original stack.}
			\label{fig:3-1a}
		\end{minipage}
		\begin{minipage}{.49\linewidth}
			\centering
			\begin{tikzpicture}
			\edef\sizetape{0.8cm}
			\tikzstyle{tmtape}=[draw,minimum size=\sizetape]
			\tikzstyle{tmhead}=[arrow box,draw,minimum size=.45cm,arrow box
			arrows={north:0.25cm}]
			
			\begin{scope}[start chain=1 going right,node distance=0mm, outer sep=0mm]
			\node [on chain=1,tmtape,draw=none]{$\ldots$};
			\node [on chain=1,tmtape,pattern=north west lines, pattern color=gray]{$a$};
			\node [on chain=1,tmtape,pattern=north west lines, pattern color=gray]{$b$};
			\node [on chain=1,tmtape](input){$c$};
			\node [on chain=1,tmtape]{$d$};
			\node [on chain=1,tmtape,draw=none] {$\ldots$};
			\end{scope}
			
			\node [tmhead,yshift=-0.55cm] at (input.south){\textsf{sp}};
			\end{tikzpicture}
			\captionof{figure}{The stack after one pop.}
			\label{fig:3-1b}
		\end{minipage}
	\end{figure}
	
	\begin{figure}
		\centering
		\begin{minipage}{.49\linewidth}
			\centering
			\begin{tikzpicture}
			\edef\sizetape{0.8cm}
			\tikzstyle{tmtape}=[draw,minimum size=\sizetape]
			
			\begin{scope}[start chain=1 going right,node distance=0mm, outer sep=0mm]
			\node [on chain=1,tmtape]{$b$};
			\node [on chain=1,tmtape]{$c$};
			\node [on chain=1,tmtape]{$d$};
			\node [on chain=1,tmtape,draw=none] {$\ldots$};
			\end{scope}
			\end{tikzpicture}
			\captionof{figure}{The original PDS stack.}
			\label{fig:3-1c}
		\end{minipage}
		\begin{minipage}{.49\linewidth}
			\centering
			\begin{tikzpicture}
			\edef\sizetape{0.8cm}
			\tikzstyle{tmtape}=[draw,minimum size=\sizetape]
			
			\begin{scope}[start chain=1 going right,node distance=0mm, outer sep=0mm]
			\node [on chain=1,tmtape]{$c$};
			\node [on chain=1,tmtape]{$d$};
			\node [on chain=1,tmtape,draw=none] {$\ldots$};
			\end{scope}
			\end{tikzpicture}
			\captionof{figure}{The PDS stack after one pop.}
			\label{fig:3-1d}
		\end{minipage}
	\end{figure}
	
	This subtle difference becomes important when we want to analyze programs that directly manipulate the stack pointer and use assembly code. Indeed, in most assembly languages, \emph{sp} can be used like any other register. As an example, the instruction $\mathsf{mov \;eax \;[sp - 4]}$ will put the value pointed to at address $\mathsf{sp-4}$ in the register \textsf{eax} (one of the general registers). Since $\mathsf{sp-4}$ is an address above the stack pointer, we do not know what is being copied into the register $\mathsf{eax}$, unless we have a way to record the elements that had previously been popped from the stack and not overwritten yet. Such instructions may happen in malicious assembly programs: malware writers tend to do unusual things in order to obfuscate their payload and thwart static analysis.
	
	\begin{figure}
		\centering
		\begin{minipage}{.49\linewidth}
			\centering
			\begin{tikzpicture}
			\edef\sizetape{0.8cm}
			\tikzstyle{tmtape}=[draw,minimum size=\sizetape]
			
			\begin{scope}[start chain=1 going right,node distance=0mm, outer sep=0mm]
			\node [on chain=1,tmtape,draw=none] {$\ldots$};
			\node [on chain=1,tmtape](bottom){$a$};
			\node [on chain=1,tmtape,right=3mm of bottom]{$b$};
			\node [on chain=1,tmtape]{$c$};
			\node [on chain=1,tmtape]{$d$};
			\node [on chain=1,tmtape,draw=none]{$\ldots$};
			\end{scope}
			\end{tikzpicture}
			\captionof{figure}{The original UPDS stacks.}
			\label{fig:3-1e}
		\end{minipage}
		\begin{minipage}{.49\linewidth}
			\centering
			\begin{tikzpicture}
			\edef\sizetape{0.8cm}
			\tikzstyle{tmtape}=[draw,minimum size=\sizetape]
			
			\begin{scope}[start chain=1 going right,node distance=0mm, outer sep=0mm]
			\node [on chain=1,tmtape,draw=none] {$\ldots$};
			\node [on chain=1,tmtape]{$a$};
			\node [on chain=1,tmtape](bottom) {$b$};
			\node [on chain=1,tmtape,right=3mm of bottom] {$c$};
			\node [on chain=1,tmtape]{$d$};
			\node [on chain=1,tmtape,draw=none]{$\ldots$};
			\end{scope}
			\end{tikzpicture}
			\captionof{figure}{The UPDS stacks after one pop.}
			\label{fig:3-1f}
		\end{minipage}
	\end{figure}
	
	Thus, it is important to record the part of the memory that is just above the stack pointer. To this end, we extend PDSs in order to keep track of this \emph{upper stack}: we introduce a new model called \emph{pushdown system with an upper stack} (UPDS) that extends the semantics of PDSs. In a UPDS, when a letter is popped from the top of the stack (\emph{lower stack} from now on), it is added to the bottom of a write-only \emph{upper stack}, effectively simulating the decrement of the stack pointer. This is shown in Figures \ref{fig:3-1e} and \ref{fig:3-1f}, where after being popped, $b$ is removed from the lower stack (on the right) and added to the upper stack (on the left) instead of being destroyed. The top of the lower stack and the bottom of the upper stack meet at the stack pointer.
	
	We prove that the following properties hold for the class of UPDSs:
	\begin{itemize}
		\item the sets of predecessors and successors of a regular set of configurations are not regular; however, the set of successors of a regular set of configurations is context-sensitive;
		
		\item the set of predecessors is regular given a limit of $k$ phases, a phase being a part of a run during which either pop or push rules are forbidden; this is an under-approximation of the actual set of predecessors;
		
		\item an over-approximation of the set of successors can be computed by abstracting the set of runs first;
	\end{itemize}
	
	We then show that the UPDS model and the approximations of its reachability sets can be used to find errors and security flaws in programs.
	
	This paper is the full version of \cite{PDT-lata17}.
	
	\smallskip
	\noindent
	{\bf Paper outline.}
	We define in Section 1 a new class of pushdown systems called \emph{pushdown systems with an upper stack}. We prove in Section 2 that neither the set of predecessors nor the set of successors of a regular set of configurations are regular, but that the set of successors is nonetheless context-sensitive. Then, in Section 3, we prove that the set of predecessors of an UPDS is regular given a \emph{bounded-phase} constraint. In Section 4, we give an algorithm to compute an over-approximation of the set of successors. In Section 5, we show how these approximations could be applied to find errors or security flaws in programs. Finally, we describe the related work in Section 6 and show our conclusion in Section 7.

	\section{Pushdown systems with an upper stack}
	
	\begin{definition}[pushdown system with an upper stack]
		A \emph{pushdown system with an upper stack} (UPDS) is a triplet $\mathcal{P} = ( P, \Gamma, \Delta )$ where $P$ is a finite set of control states, $\Gamma$ is a finite stack alphabet, and $\Delta \subseteq P \times \Gamma \times P \times ( \lbrace \varepsilon \rbrace \cup \Gamma \cup \Gamma^{2} )$ a finite set of transition rules.
	\end{definition}
	
	We further note $\Delta_{pop} = \Delta \cap P \times \Gamma \times P \times \lbrace \varepsilon \rbrace$, $\Delta_{switch} = \Delta \cap P \times \Gamma \times P \times \Gamma$, and $\Delta_{push} = \Delta \cap P \times \Gamma \times P \times \Gamma^2$. If $\delta = ( p, w, p', w') \in \Delta$, we write $\delta = ( p, w ) \rightarrow ( p', w' )$. In a UPDS, a write-only \emph{upper stack} is maintained above the stack used for computations (from then on called the \emph{lower stack}), and modified accordingly during a transition.
	
	For $x \in \Gamma$ and $w \in \Gamma^*$, $\left| w \right|_x$ stands for the number of times the letter $x$ appears in the word $w$, and $w^R$ for the mirror image of $w$. Let $\bar{\Gamma}$ be a disjoint copy (bijection) of the stack alphabet $\Gamma$. If $x \in \Gamma$ (resp. $\Gamma^*$), then its associated letter (resp. word) in $\bar{\Gamma}$ (resp. $\bar{\Gamma}^*$) is written $\bar{x}$.
	
	A \emph{configuration} of $\mathcal{P}$ is a triplet $\langle p, w_u, w_l \rangle$ where $p \in P$ is a control state, $w_u \in \Gamma^*$ an \emph{upper stack content}, and $w_l \in \Gamma^*$ a \emph{lower stack content}. Let $Conf_\mathcal{P} = P \times \Gamma^*$ be the set of configurations of $\mathcal{P}$.
	
	A set of configurations $\mathcal{C}$ of a UPDS $\mathcal{P}$ is said to be \emph{regular} if for all $p \in P$, there exists a finite-state automaton $\mathcal{A}_p$ on the alphabet $\bar{\Gamma} \cup \Gamma$ such that $\mathcal{L} ( \mathcal{A}_p ) = \lbrace \bar{w_u} w_l \mid \langle p, w_u, w_l \rangle \in \mathcal{C} \rbrace$, where $\mathcal{L} ( \mathcal{A} )$ stands for the language recognized by an automaton $\mathcal{A}$.
	
	From the set of transition rules $\Delta$, we can infer an \emph{immediate successor relation} $\rightarrow_\mathcal{P} = ( \underset{\delta \in \Delta}{\bigcup} \overset{\delta}{\rightarrow} )$ on configurations of $\mathcal{P}$, which is defined as follows:
	
	\begin{description}
		\item[Switch rules:] if $\delta = ( p, \gamma ) \rightarrow ( p', \gamma' ) \in \Delta_{switch}$, then $\forall w_u \in \Gamma^*$ and $\forall w_l \in \Gamma^*$, $\langle p, w_u, \gamma w_l \rangle \overset{\delta}{\rightarrow} \langle p', w_u, \gamma' w_l \rangle$. The top letter $\gamma$ of the lower stack is replaced by $\gamma'$, but the upper stack is left untouched (the stack pointer doesn't move).
		
		\item[Pop rules:] if $\delta = ( p, \gamma ) \rightarrow ( p', \varepsilon ) \in \Delta_{pop}$, then $\forall w_u \in \Gamma^*$ and $\forall w_l \in \Gamma^*$, $\langle p, w_u, \gamma w_l \rangle \overset{\delta}{\Rightarrow} \langle p', w_u \gamma, w_l \rangle$. The top letter $\gamma$ popped from the lower stack is added to the bottom of the upper stack (the stack pointer moves to the right), as shown in Figure \ref{fig:3-2a}.	
		
		\item[Push rules:] if $\delta = ( p, \gamma ) \rightarrow (p', \gamma' \gamma'' ) \in \Delta_{push}$, then $\forall w_l \in \Gamma^*$, $\forall w_u \in \Gamma^*$, $\langle p, \varepsilon, \gamma w_l \rangle \overset{\delta}{\Rightarrow} \langle p', \varepsilon, \gamma' \gamma'' w_l \rangle$ and $\forall x \in \Gamma$, $\langle p, w_u x, a w_l \rangle \overset{\delta}{\Rightarrow} \langle p', w_u, \gamma' \gamma'' w_l \rangle$. A new letter $b$ is pushed on the lower stack, and a single letter is deleted from the bottom of the upper stack in order to make room for it, unless the upper stack was empty (the stack pointer moves to the left), as shown in Figure \ref{fig:3-2b}.	
	\end{description}
	
	\begin{figure}
		\centering
		\begin{minipage}{.4\linewidth}
			\centering
			\begin{tikzpicture}
			\edef\sizetape{0.8cm}
			\tikzstyle{tmtape}=[draw,minimum size=\sizetape]
			
			\begin{scope}[start chain=1 going right,node distance=0mm, outer sep=0mm]
			\node [on chain=1,tmtape]{$\gamma_1$};
			\node [on chain=1,tmtape,draw=none]{\textbf{p}};
			\node [on chain=1,tmtape]{$\gamma$};
			\node [on chain=1,tmtape]{$\gamma_2$};
			\node [on chain=1,tmtape]{$\gamma_3$};
			\end{scope}
			\end{tikzpicture}
		\end{minipage}
		$\overset{\delta}{\rightarrow}$
		\begin{minipage}{.4\linewidth}
			\centering
			\begin{tikzpicture}
			\edef\sizetape{0.8cm}
			\tikzstyle{tmtape}=[draw,minimum size=\sizetape]
			
			\begin{scope}[start chain=1 going right,node distance=0mm, outer sep=0mm]
			\node [on chain=1,tmtape]{$\gamma_1$};
			\node [on chain=1,tmtape]{$\gamma$};
			\node [on chain=1,tmtape,draw=none]{\textbf{p'}};
			\node [on chain=1,tmtape]{$\gamma_2$};
			\node [on chain=1,tmtape]{$\gamma_3$};
			\end{scope}
			\end{tikzpicture}
		\end{minipage}
		\captionof{figure}{Semantics of pop rules.}
		\label{fig:3-2a}
	\end{figure}
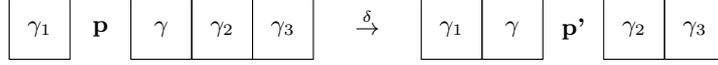
	
	\begin{figure}
		\centering
		\begin{minipage}{.4\linewidth}
			\centering
			\begin{tikzpicture}
			\edef\sizetape{0.8cm}
			\tikzstyle{tmtape}=[draw,minimum size=\sizetape]
			
			\begin{scope}[start chain=1 going right,node distance=0mm, outer sep=0mm]
			\node [on chain=1,tmtape]{$\gamma_1$};
			\node [on chain=1,tmtape]{$\gamma_2$};
			\node [on chain=1,tmtape,draw=none]{\textbf{p}};
			\node [on chain=1,tmtape]{$\gamma$};
			\node [on chain=1,tmtape]{$\gamma_3$};
			\end{scope}
			\end{tikzpicture}
		\end{minipage}
		$\overset{\delta}{\rightarrow}$
		\begin{minipage}{.4\linewidth}
			\centering
			\begin{tikzpicture}
			\edef\sizetape{0.8cm}
			\tikzstyle{tmtape}=[draw,minimum size=\sizetape]
			
			\begin{scope}[start chain=1 going right,node distance=0mm, outer sep=0mm]
			\node [on chain=1,tmtape]{$\gamma_1$};
			\node [on chain=1,tmtape,draw=none]{\textbf{p'}};
			\node [on chain=1,tmtape]{$\gamma'$};
			\node [on chain=1,tmtape]{$\gamma''$};
			\node [on chain=1,tmtape]{$\gamma_3$};
			\end{scope}
			\end{tikzpicture}
		\end{minipage}
		\captionof{figure}{Semantics of push rules.}
		\label{fig:3-2b}
	\end{figure}
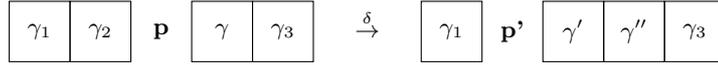
	
	The \emph{reachability} relation $\Rightarrow_\mathcal{P}$ is the reflexive and transitive closure of the immediate successor relation $\rightarrow_\mathcal{P}$. If $\mathcal{C}$ is a set of configurations, we introduce its set of \emph{successors} $post^* (\mathcal{P}, \mathcal{C} ) = \lbrace c \in P \times \Gamma^* \times \Gamma^* \mid \exists c' \in \mathcal{C}, c' \Rightarrow_\mathcal{P} c \rbrace$ and its set of \emph{predecessors} $pre^* (\mathcal{P}, \mathcal{C} ) = \lbrace c \in P \times \Gamma^* \times \Gamma^* \mid \exists c' \in \mathcal{C}, c \Rightarrow_\mathcal{P} c' \rbrace$. We may omit the variable $\mathcal{P}$ when only a single UPDS is being considered.
	
	For a set of configurations $\mathcal{C}$, let $\mathcal{C}_{low} = \lbrace \langle p, w_l \rangle \mid \exists w_u \in \Gamma^*, \langle p, w_u, w_l \rangle  \in \mathcal{C} \rbrace$ and $\mathcal{C}_{up} = \lbrace \langle p, w_u \rangle \mid \exists w_l \in \Gamma^*, \langle p, w_u, w_l \rangle \in \mathcal{C} \rbrace$. We then define $post^*_{up} ( \mathcal{P}, \mathcal{C} )  = ( post^* ( \mathcal{P}, \mathcal{C} ) )_{up}$, as well as $post^*_{low} ( \mathcal{P}, \mathcal{C} )$, $pre^*_{up} ( \mathcal{P}, \mathcal{C} )$ and $pre^*_{low} ( \mathcal{P}, \mathcal{C} )$ in a similar fashion.
	
	A \emph{finite run} $r$ of $\mathcal{P}$ from a configuration $c \in Conf_\mathcal{P}$ is a finite sequence of configurations $(c_i)_{i = 0, \dots, n}$ such that $c_0 = c$ and $c_0 \overset{t_1}{\rightarrow} c_1 \overset{t_2}{\rightarrow} c_2 \dots \overset{t_n}{\rightarrow} c_n$, where $t = (t_i)_{i = 1, \dots, n}$ is a sequence of transitions in $\Delta^*$, also called the \emph{trace} of $r$. We then write $c_0 \overset{t}{\Rightarrow}_{\mathcal{P}} c_n$, or $c_0 \Rightarrow^n_{\mathcal{P}} c_n$ ($c_n$ is reachable from $c_0$ in $n$ steps).
	
	We say that $r$ is a run of $\mathcal{P}$ from a set of configurations $\mathcal{C}$ if and only if $\exists c \in \mathcal{C}$ such that $r$ is a run of $\mathcal{P}$ from $c$. Let $Runs (\mathcal{P}, \mathcal{C})$ (resp. $Traces (\mathcal{P}, \mathcal{C})$) be the set of all finite runs (resp. traces) of $\mathcal{P}$ from a set of configurations $\mathcal{C}$.
	
	A UPDS and a PDS indeed share the same definition, but the semantics of the former expand the latter's. For a set $\mathcal{C} \subseteq P \times \Gamma^*$ of lower stack configurations (the upper stack is ignored) and a UPDS $\mathcal{P}$, let $post_{\text{\tiny{\emph{PDS}}}}^* ( \mathcal{P}, \mathcal{C} )$ and $pre_{\text{\tiny{\emph{PDS}}}}^* ( \mathcal{P}, \mathcal{C} )$ be the set of forward and backward reachable configurations from $\mathcal{C}$ using the PDS semantics. The following lemmas hold:
	
	\begin{lemma}
		\label{lm_runs_pds_upds}
		Given a UPDS $\mathcal{P} = ( P, \Gamma, \Delta )$ and a set of configurations $\mathcal{C}$, $t$ is a trace from $\mathcal{C}$ with respect to the UPDS semantics if and only if $t$ is a trace from $\mathcal{C}_{low}$ with respect to the standard PDS semantics.
	\end{lemma}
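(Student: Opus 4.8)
The plan is to exploit the fact that a UPDS and its underlying PDS share not only the same set of rules $\Delta$ but also the same \emph{enabling condition} for each rule. Inspecting the three clauses of the immediate successor relation, a rule $\delta = ( p, \gamma ) \rightarrow ( p', w' )$ can be fired from a UPDS configuration $\langle p, w_u, w_l \rangle$ if and only if the control state is $p$ and the top of the lower stack is $\gamma$, that is, $w_l \in \gamma \Gamma^*$. This condition never refers to the upper stack $w_u$. Read as a PDS rule, the same $\delta$ fires from $\langle p, w_l \rangle$ under exactly this condition and produces the same target state $p'$ together with the same lower stack, namely $w'$ concatenated with the tail of $w_l$. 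Hence the lower-stack component evolves identically in both models, while the upper stack is a deterministic by-product of the trace that is never consulted when deciding firability.

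First I would isolate this as a one-step claim: for every $\delta \in \Delta$, every $w_u \in \Gamma^*$ and every $w_l \in \Gamma^*$, there is a UPDS step $\langle p, w_u, w_l \rangle \overset{\delta}{\rightarrow} \langle p', w_u', w_l' \rangle$ for some $w_u'$ if and only if there is a PDS step $\langle p, w_l \rangle \overset{\delta}{\rightarrow} \langle p', w_l' \rangle$, with the same $p'$ and the same $w_l'$. The switch and pop cases are immediate from the definitions. The only case that requires a short argument is the push rule, whose UPDS semantics is split according to whether $w_u$ is empty or not; here I would observe that both sub-cases are explicitly covered, so a push is enabled in the UPDS precisely when it is enabled in the PDS, the differing treatment of $w_u$ affecting only the resulting upper stack $w_u'$ and never the firability or the resulting lower stack $w_l'$.

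Next I would lift this to full traces by induction on the length $n$ of $t$. The base case $n = 0$ is trivial. For the inductive step, writing $t = t' \delta$, the induction hypothesis applied to $t'$ shows that $t'$ is applicable in the UPDS from $\langle p, w_u, w_l \rangle$ if and only if it is applicable in the PDS from $\langle p, w_l \rangle$, both reaching matching lower-stack configurations and a common control state; applying the one-step claim once more to the final transition $\delta$, whose enabling depends only on that common control state and lower-stack top, closes the induction. This yields the key invariant: for \emph{any} initial upper stack $w_u$, the trace $t$ is the trace of a UPDS run from $\langle p, w_u, w_l \rangle$ if and only if it is the trace of a PDS run from $\langle p, w_l \rangle$.

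Finally I would discharge the statement about sets. For the forward direction, if $t$ is a UPDS trace from some $c = \langle p, w_u, w_l \rangle \in \mathcal{C}$, the invariant gives a PDS run with trace $t$ from $\langle p, w_l \rangle$, and $\langle p, w_l \rangle \in \mathcal{C}_{low}$ by the definition of $\mathcal{C}_{low}$. For the converse, every $\langle p, w_l \rangle \in \mathcal{C}_{low}$ is by definition the projection of some $\langle p, w_u, w_l \rangle \in \mathcal{C}$, so a PDS run with trace $t$ from $\langle p, w_l \rangle$ lifts, via the invariant, to a UPDS run from that configuration of $\mathcal{C}$. The only point demanding genuine care is the push rule's case split on the emptiness of $w_u$, which is exactly why I separate the one-step enabling equivalence before the induction; the remainder is routine bookkeeping.
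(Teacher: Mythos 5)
Your proof is correct and takes essentially the same approach as the paper, which justifies this lemma with the single observation that, once the upper stack is ignored, a UPDS and a PDS share the same semantics; your one-step enabling claim (firability depends only on the control state and the top of the lower stack, with the push rule's case split on $w_u$ affecting only the resulting upper stack) followed by induction on trace length is a rigorous elaboration of exactly that observation. In fact your write-up is more detailed than the paper's own one-sentence justification, and no gap remains.
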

	
	\begin{lemma}
		\label{lm_reach_pds_upds}
		Given a UPDS $\mathcal{P} = ( P, \Gamma, \Delta )$ and a set of configurations $\mathcal{C}$,  $post^*_{low} \linebreak ( \mathcal{P}, \mathcal{C} ) = post_{\text{\tiny{PDS}}}^* ( \mathcal{P}, \mathcal{C}_{low} )$ and $pre^*_{low} ( \mathcal{P}, \mathcal{C} ) = pre_{\text{\tiny{PDS}}}^* ( \mathcal{P}, \mathcal{C}_{low} )$.
	\end{lemma}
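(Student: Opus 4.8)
The plan is to derive this lemma directly from Lemma~\ref{lm_runs_pds_upds}, which already identifies the traces available under the two semantics, together with the observation that the \emph{lower} stack component is transformed in exactly the same way under both. Indeed, inspecting the three rule types, a switch rule sends the lower stack $\gamma w_l$ to $\gamma' w_l$, a pop rule sends $\gamma w_l$ to $w_l$, and a push rule sends $\gamma w_l$ to $\gamma' \gamma'' w_l$; in each case the upper stack plays no role in determining the resulting lower stack, and these are precisely the standard PDS operations. Consequently, if a trace $t$ is applicable from a configuration whose lower stack is $w_l$, then the lower stack reached after reading $t$ is a function of $t$ and $w_l$ alone, identical under the UPDS and the PDS semantics. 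I would state this as a preliminary observation before turning to the two equalities.

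For the $post^*$ equality I would prove the two inclusions separately. For $\subseteq$, take $\langle p', w_l' \rangle \in post^*_{low}(\mathcal{P}, \mathcal{C})$; by definition there is some $w_u'$ with $\langle p', w_u', w_l' \rangle \in post^*(\mathcal{P}, \mathcal{C})$, hence a configuration $\langle p, w_u, w_l \rangle \in \mathcal{C}$ and a trace $t$ with $\langle p, w_u, w_l \rangle \overset{t}{\Rightarrow}_{\mathcal{P}} \langle p', w_u', w_l' \rangle$. By Lemma~\ref{lm_runs_pds_upds}, $t$ is a PDS trace from $\langle p, w_l \rangle \in \mathcal{C}_{low}$, and by the preliminary observation it yields the lower stack $w_l'$, so $\langle p', w_l' \rangle \in post_{\text{\tiny{PDS}}}^*(\mathcal{P}, \mathcal{C}_{low})$. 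For $\supseteq$, take $\langle p', w_l' \rangle \in post_{\text{\tiny{PDS}}}^*(\mathcal{P}, \mathcal{C}_{low})$, reached from some $\langle p, w_l \rangle \in \mathcal{C}_{low}$ by a PDS trace $t$; pick $w_u$ with $\langle p, w_u, w_l \rangle \in \mathcal{C}$. By Lemma~\ref{lm_runs_pds_upds} the trace $t$ is applicable in the UPDS from $\langle p, w_u, w_l \rangle$, and by the preliminary observation the resulting configuration has lower stack $w_l'$, so $\langle p', w_l' \rangle \in post^*_{low}(\mathcal{P}, \mathcal{C})$.

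The $pre^*$ equality follows by the dual argument: one reverses the direction of the reachability relation, so that a UPDS run from $c$ to some $c' \in \mathcal{C}$ becomes, at the level of traces and lower stacks, a PDS run from $\mathcal{C}_{low}$ read backwards, and vice versa. Since the preliminary observation is symmetric in the two endpoints of a run, the same case analysis on the two inclusions carries over verbatim, with the roles of source and target configurations exchanged.

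I expect no serious obstacle: the entire content is concentrated in Lemma~\ref{lm_runs_pds_upds}, and the only point requiring care is the bookkeeping of the existential witness $w_u$ for the upper stack when passing between $\mathcal{C}$ and $\mathcal{C}_{low}$. That such a witness exists is exactly the definition of $\mathcal{C}_{low}$, so the argument never needs to control what the upper stack actually contains, nor how it evolves — only that the lower stack and the trace behave identically in both models.
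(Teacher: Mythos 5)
Your preliminary observation and the two inclusions for $post^*$ are correct, and they amount to the paper's (much terser) argument: the paper disposes of both lemmas with the single remark that a UPDS and a PDS share the same semantics once the upper stack is ignored. The genuine problem is in the $pre^*$ half, which you claim "carries over verbatim" by exchanging source and target. It does not, because the two equalities are not symmetric in where $\mathcal{C}$ constrains the run. For $post^*$ the constrained endpoint is the \emph{source}, which lies in $\mathcal{C}$ together with its actual upper stack, and the target's upper stack is existentially free, so you may accept whatever upper stack the run happens to produce. For $pre^*$ the constrained endpoint is the \emph{target}: to show $pre_{\text{\tiny{PDS}}}^* ( \mathcal{P}, \mathcal{C}_{low} ) \subseteq pre^*_{low} ( \mathcal{P}, \mathcal{C} )$ you must exhibit an initial upper stack $w_u$ such that the UPDS run ends in a configuration belonging to $\mathcal{C}$, \emph{upper stack included}. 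But the final upper stack is a deterministic function of $w_u$ and the trace $t$, and that function is in general not surjective: after a pop of $\gamma$, the upper stack necessarily ends with $\gamma$, whatever $w_u$ was. So the existential witness cannot always be chosen, and "the argument never needs to control what the upper stack actually contains" is exactly what fails here.

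Concretely, take $P = \lbrace p, p' \rbrace$, $\Gamma = \lbrace a, b \rbrace$, $\Delta = \lbrace ( p, a ) \rightarrow ( p', \varepsilon ) \rbrace$ and $\mathcal{C} = \lbrace \langle p', b, \varepsilon \rangle \rbrace$. Then $\mathcal{C}_{low} = \lbrace \langle p', \varepsilon \rangle \rbrace$ and $\langle p, a \rangle \in pre_{\text{\tiny{PDS}}}^* ( \mathcal{P}, \mathcal{C}_{low} )$, since $\langle p, a \rangle \rightarrow \langle p', \varepsilon \rangle$ under PDS semantics. Yet for every $w_u$ the only nontrivial UPDS step from $\langle p, w_u, a \rangle$ leads to $\langle p', w_u a, \varepsilon \rangle$, whose upper stack ends in $a \neq b$; hence no such configuration is in $pre^* ( \mathcal{P}, \mathcal{C} )$, and $\langle p, a \rangle \notin pre^*_{low} ( \mathcal{P}, \mathcal{C} )$. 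Thus only the inclusion $pre^*_{low} ( \mathcal{P}, \mathcal{C} ) \subseteq pre_{\text{\tiny{PDS}}}^* ( \mathcal{P}, \mathcal{C}_{low} )$ follows from your observation; the converse requires an extra hypothesis, e.g.\ that $\mathcal{C}$ leaves the upper stack unconstrained, i.e.\ $\mathcal{C} = \lbrace \langle p, w_u, w_l \rangle \mid w_u \in \Gamma^*, \langle p, w_l \rangle \in \mathcal{C}_{low} \rbrace$. To be fair, the paper's own one-sentence proof glosses over exactly the same point, but your write-up turns the gloss into an explicit step ("carries over verbatim, with the roles of source and target exchanged"), and that step is the one that breaks.
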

	
	Lemmas \ref{lm_runs_pds_upds} and \ref{lm_reach_pds_upds} are true because, if we ignore the upper stack, a PDS and a UPDS share the same semantics.

	\section{Reachability properties}
	
	As shown in \cite{BEM-concur97, EHRS-cav00}, we know that $pre_{\text{\tiny{\emph{PDS}}}}^*$ and $post_{\text{\tiny{\emph{PDS}}}}^*$ are regular for a regular set of starting configurations. We prove that these results cannot be extended to UPDSs, but that $post^*$ is still context-sensitive. This implies that reachability of a single configuration is decidable for UPDSs.

	\subsection{$post^*$ is not regular}
	
	The following counterexample proves that, unfortunately, $post^* ( \mathcal{P}, \mathcal{C} )$ is not always regular for a given regular set of configurations $\mathcal{C}$ and a UPDS $\mathcal{P}$. The intuition behind this statement is that the upper stack can be used to store symbols in a non-regular fashion. The counter-example should be carefully designed in order to prevent later push operations from overwriting these symbols.
	
	Let $\mathcal{P} = ( P, \Gamma, \Delta )$ be a UPDS with $P = \lbrace p, p' \rbrace, \Gamma = \lbrace a, b, x, y, \bot \rbrace$, and $\Delta$ the following set of pushdown transitions:
	\[
	\begin{array}{cccc}
	( S_x ) & ( p, x ) \rightarrow ( p, a ) & ( R_a ) & ( p, a ) \rightarrow ( p, \varepsilon ) \\
	( S_y ) & ( p, y ) \rightarrow ( p, b ) & ( R_b ) & ( p, b ) \rightarrow ( p, \varepsilon ) \\
	( C ) & ( p, a ) \rightarrow ( p, ab ) & ( E ) & ( p, \bot ) \rightarrow ( p', \bot )
	\end{array}
	\]
	Let $\mathcal{C} = \lbrace p \rbrace \times \lbrace \varepsilon \rbrace \times x ( yx )^* \bot$ be a regular set of configurations. We can compute a relevant subset $L$ of $post^* ( \mathcal{C} )$:
	
	\begin{lemma}
		\label{lm_post_not_reg_1}
		$L = \lbrace \langle p', a^{n+1} b^{n}, \bot \rangle, n \in \mathbb{N} \rbrace \subseteq \text{post}^* ( \mathcal{C} )$.
	\end{lemma}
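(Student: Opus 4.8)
The plan is to prove the inclusion by exhibiting, for every $n \in \mathbb{N}$, an explicit run of $\mathcal{P}$ starting from the configuration $\langle p, \varepsilon, x(yx)^n \bot \rangle \in \mathcal{C}$ and ending in $\langle p', a^{n+1} b^n, \bot \rangle$. Since $\Rightarrow_\mathcal{P}$ is the reflexive and transitive closure of $\rightarrow_\mathcal{P}$, producing one such run per $n$ suffices. The naive idea of converting each $x$ into $a$ and each $y$ into $b$ and popping them in stack order fails: it yields the \emph{interleaved} upper stack $a(ab)^n$ rather than the \emph{sorted} word $a^{n+1} b^n$. The whole point is therefore to use the push rule $(C)$, whose UPDS semantics overwrite the bottom of the upper stack, as a \emph{sorting} device that re-injects an $a$ underneath the block of $b$'s already accumulated on the upper stack.

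First I would isolate the key macro $M_j = (C)^j (R_a) (R_b)^j$ and establish its effect: from any configuration $\langle p, u\, b^j, a\, w \rangle$ it reaches $\langle p, u\, a\, b^j, w \rangle$. Indeed, the $j$ successive applications of $(C)$ keep an $a$ on top of the lower stack while pushing $j$ copies of $b$ below it and, by the push semantics, consuming exactly the top block $b^j$ of the upper stack; the final $(R_a)(R_b)^j$ then pops these symbols back, leaving one extra $a$ inserted just beneath $b^j$. Intuitively, $M_j$ lifts the accumulated $b$-block, slips an $a$ under it, and puts the block back.

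Then I would assemble the run as $(S_x)(R_a)$, followed by $n$ iterations of the loop body $(S_y)(R_b)(S_x)\,M_j$ for $j = 1, \dots, n$, followed by $(E)$. The correctness proof is an induction on the loop counter using the invariant that, before the $j$-th iteration, the configuration is $\langle p, a^j b^{j-1}, (yx)^{n-j+1} \bot \rangle$. The base case follows from $(S_x)(R_a)$, which yields $\langle p, a, (yx)^n \bot \rangle$. For the inductive step, $(S_y)(R_b)$ pops the leading $y$ as a fresh $b$, turning the upper stack $a^j b^{j-1}$ into the still-sorted $a^j b^j$; $(S_x)$ exposes an $a$ on top of the lower stack; and $M_j$ (applicable precisely because the upper stack now ends in $b^j$ and the lower stack top is $a$) inserts that $a$ to produce $a^{j+1} b^j$ while consuming one $yx$ block. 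After the $n$-th iteration the upper stack is $a^{n+1} b^n$ and the lower stack is reduced to $\bot$, so the final switch rule $(E)$ reaches $\langle p', a^{n+1} b^n, \bot \rangle$.

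I expect the main obstacle to be verifying $M_j$ against the precise push semantics: rule $(C)$ deletes a letter from the \emph{bottom} of the upper stack unless it is empty, so one must check that throughout the $j$ pushes the upper stack indeed still carries the block $b^j$ to be consumed, and that the lower stack keeps an $a$ on top so that every $(C)$ is enabled. Maintaining the global invariant that the upper stack stays of the sorted form $a^+ b^+$, never letting a later push overwrite a symbol we wish to keep, is exactly the delicate bookkeeping the counterexample was designed to make possible, and it is where the argument must be carried out most carefully.
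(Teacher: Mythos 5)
Your proposal is correct and is essentially the paper's own proof: the run you construct, $(S_x)(R_a)$ followed by the blocks $(S_y)(R_b)(S_x)\,C^j(R_a)(R_b)^j$ for $j=1,\dots,n$ and then $(E)$, is exactly the run the paper builds, with your macro $M_j$ coinciding with the paper's induction-step sequence $S_y R_b S_x C^{n+1} R_a R_b^{n+1}$. The only difference is bookkeeping: you run an induction on the loop counter $j$ with an explicit invariant, whereas the paper inducts on $n$ and must argue that the prefix run never touches $\bot$ so it can be replayed with suffix $yx\bot$ --- your formulation quietly avoids that replay step, but the underlying argument is the same.
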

	
	\begin{proof}
		We prove that $\langle p, \varepsilon, x(yx)^{n} \bot \rangle \Rightarrow \langle p, a^{n+1} b^{n}, \bot \rangle$ by induction on $n$.
		\begin{description}
			\item[Basis:] $\langle p, \varepsilon, x \bot \rangle \Rightarrow \langle p, \varepsilon, a \bot \rangle \Rightarrow \langle p, a, \bot \rangle$.
			
			\item[Induction step:] if $\langle p, \varepsilon, ( xy )^{n} x \bot \rangle \Rightarrow \langle p, a^{n+1} b^{n}, \bot \rangle$, since the only rule able to read or modify the symbol $\bot$ is $( E )$ but it has not been applied as the PDS would end up in state $p'$, we have $\langle p, \varepsilon, ( xy )^{n} x \rangle \Rightarrow \langle p, a^{n+1} b^{n}, \varepsilon \rangle$, hence, $\langle p, \varepsilon, ( xy )^{n+1} x \bot \rangle \Rightarrow \langle p, a^{n+1} b^{n}, yx \bot \rangle$.
			
			However, $\langle p, a^{n+1} b^{n}, yx \bot \rangle \overset{S_y R_b S_x C^{n+1}}{\Rightarrow} \langle p, a^{n+1}, ab^{n+1} \bot \rangle$ and also $\langle p, a^{n+1}, \linebreak  ab^{n+1} \bot \rangle \overset{R_a R_b^{n+1}}{\Rightarrow} \langle p, a^{n+2}b^{n+1}, \bot \rangle$. From there, we have $\langle p, a^{n+1} b^{n}, \bot \rangle \overset{E}{\Rightarrow} \langle p', a^{n+1} b^{n}, \bot \rangle$.
		\end{description}
		Hence, $\langle p', a^{n+1} b^{n}, \bot \rangle \in post^* (C), \forall n \in \mathbb{N}$.
	\end{proof}
	
	Then, we prove an inequality that holds for any configuration in $post^*$:
	
	\begin{lemma}
		\label{lm_post_not_reg_2}
		$\forall \langle p, w_u, w_l \rangle \in \text{post}^* ( \mathcal{C} )$, $w = \bar{w_u} w_l$, $\left| w \right|_b + \left| w \right|_{\bar{b}} + 1 \geq \left| w \right|_a + \left| w \right|_{\bar{a}}$.
	\end{lemma}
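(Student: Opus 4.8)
The plan is to prove the equivalent inequality $N_a - N_b \le 1$, where $N_a = \left| w \right|_a + \left| w \right|_{\bar a}$ and $N_b = \left| w \right|_b + \left| w \right|_{\bar b}$ count all the $a$'s and all the $b$'s across both stacks. A naive induction on $N_a - N_b$ fails, since rule $(S_x)$ increases it; instead I would fix a run reaching $\langle p, w_u, w_l \rangle$ from some $\langle p, \varepsilon, x(yx)^n \bot \rangle \in \mathcal{C}$ and track four bookkeeping quantities along it: the numbers $s_x$ and $s_y$ of applications of $(S_x)$ and $(S_y)$, and, for the push rule $(C)$, the numbers $c_a$ and $c_0$ of times it erases an $a$ from the bottom of the upper stack, respectively is applied to an empty upper stack. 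By Lemma~\ref{lm_runs_pds_upds} the trace is a legal PDS run on the lower stack, so these counts are well defined.

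First I would establish a \emph{structural invariant}: the letters $x$ and $y$ never appear in the upper stack, and in the lower stack they always form a contiguous bottom segment that is a suffix of $x(yx)^n\bot$. Indeed, the only pop rules are $(R_a)$ and $(R_b)$ and every pushed or switched letter is an $a$ or a $b$, so the upper stack holds only $a$'s and $b$'s, while an $x$ or a $y$ can disappear only by being switched (to $a$ by $(S_x)$, to $b$ by $(S_y)$) when it sits on top of the lower stack. Consequently $(S_x)$ can fire only when the processed prefix above this segment is empty and the segment begins with $x$, i.e. equals $x(yx)^k\bot$, in which case the number $\mathrm{rem}_x$ of surviving $x$'s equals $\mathrm{rem}_y + 1$; symmetrically $(S_y)$ fires only from a segment $(yx)^k\bot$, where $\mathrm{rem}_x = \mathrm{rem}_y$. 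Since $(S_x)$ and $(S_y)$ are the only rules changing $\mathrm{rem}_x$ and $\mathrm{rem}_y$, and they toggle $\mathrm{rem}_x - \mathrm{rem}_y$ between $1$ and $0$ starting from $1$, we get $\mathrm{rem}_x - \mathrm{rem}_y \in \{0,1\}$ throughout, whence $s_x - s_y = 1 - (\mathrm{rem}_x - \mathrm{rem}_y) \le 1$.

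Next I would carry out the arithmetic relating $N_a - N_b$ to these counts. The total number of $a$'s is increased only by $(S_x)$ — rule $(C)$ replaces the top $a$ by $ab$ and so leaves it unchanged — and decreased only when $(C)$ erases an $a$ from the bottom of the upper stack; hence $N_a = s_x - c_a$. Likewise the total number of $b$'s is increased by $(S_y)$ and by each push $(C)$, and decreased only when $(C)$ erases a $b$ from the upper stack, giving $N_b = s_y + c - c_b$, where $c = c_a + c_b + c_0$ is the total number of applications of $(C)$. Substituting yields $N_a - N_b = (s_x - s_y) - 2c_a - c_0 \le s_x - s_y$, and combining with the structural bound gives $N_a - N_b \le s_x - s_y \le 1$, which is exactly the claimed inequality.

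The main obstacle is the structural invariant of the second paragraph; the rest is routine counting. The delicate point is to argue convincingly that $x$ and $y$ are never created, never copied into the upper stack, and never popped, so that they survive only as a shrinking suffix of the initial word and thereby force the alternation $s_x \le s_y + 1$. Once this is pinned down, the bound $N_a - N_b \le 1$ follows immediately, and it is tight, as witnessed by the configurations $\langle p', a^{n+1} b^n, \bot \rangle$ of Lemma~\ref{lm_post_not_reg_1}.
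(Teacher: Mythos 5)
Your proof is correct, and its skeleton is the same as the paper's: the number of $a$'s is bounded by the number $s_x$ of applications of $(S_x)$, the shape of the initial word forces $s_x \le s_y + 1$, and the total number of $b$'s never decreases, giving $N_a \le s_x \le s_y + 1 \le N_b + 1$. However, your explicit bookkeeping is a genuine improvement over the paper's two-sentence sketch, because the paper's stated justification for the monotonicity of the $b$-count is false: it asserts that no rule can delete a $b$ from the upper stack, whereas $(C)$ can. For instance, starting from $\langle p, \varepsilon, xyx\bot\rangle \in \mathcal{C}$ and applying $S_x R_a S_y R_b S_x$ one reaches $\langle p, ab, a\bot\rangle$, and applying $(C)$ then erases the $b$ sitting at the bottom of the upper stack, yielding $\langle p, a, ab\bot\rangle$. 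Your identity $N_b = s_y + c - c_b$ with $c \ge c_b$ shows exactly why this is harmless: a push that erases a $b$ simultaneously writes a fresh $b$ on the lower stack, so the count still cannot decrease; likewise $N_a = s_x - c_a$ accounts for pushes erasing an $a$, which the paper ignores but which only helps the inequality. Your toggle invariant ($x$'s and $y$'s survive exactly as a shrinking suffix of $x(yx)^n\bot$, $\mathrm{rem}_x - \mathrm{rem}_y$ alternates between $1$ and $0$, and only $(S_x)$, $(S_y)$ change it) is similarly a precise rendering of the paper's informal claim that reaching a deeper $x$ requires switching and popping a $y$. Two cosmetic remarks: you use $c_b$ before defining it (it is clearly the number of pushes erasing a $b$ from the upper stack), and the appeal to Lemma~\ref{lm_runs_pds_upds} is unnecessary, since all your counts are defined directly on the UPDS run.
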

	
	\begin{proof}
		The only rule in $\Delta$ that can add a letter $a$ to the whole stack is $S_x$. However, in order to apply it more than once, a $x$ deeper in the lower stack must be reached beforehand, and the only way to do so is by switching a $y$ to a $b$ and popping said $b$, hence, adding a $b$ to the whole stack.
		
		Moreover, the number of $b$ in the whole stack keeps growing during a computation, since no rule can switch a $b$ on the lower stack or delete it from the upper stack. The inequality therefore holds.
	\end{proof}
	
	If we suppose that $post^* ( \mathcal{C} )$ is regular, then so is the language $L^{p'}$, where $L^{p'}= \lbrace \bar{w_u} w_l \mid \langle p', w_u, w_l \rangle \in post^* ( \mathcal{C} ) \rbrace$, and by the pumping lemma, it admits a pumping length $k$. We want to apply the pumping lemma to an element of $L$ in order to generate a configuration that should be in $post^*$ but does not comply with the previous inequality.
	
	According to Lemma \ref{lm_post_not_reg_1}, $L \subseteq post^* ( \mathcal{C} )$ and as a consequence the word $w = \overline{a^{k+1} b^{k}} \bot$ is in $L^{p'}$. Hence, if we apply the pumping lemma to $w$, there exist $x, y, z \in ( \Gamma \cup \bar{\Gamma} )^*$ such that $w = xyz$, $\left| xy \right| \leq k$, $\left| y \right| \geq 1$, and $xy^iz \in post^* ( \mathcal{C} )$, $\forall i \geq 1$. As a consequence of $w$'s definition, $x, y \in \bar{a}^*$ and $z \in ( \bar{a} + \bar{b} )^*$.
	
	Hence, for $i$ large enough, $w_i = xy^iz \in L^{p'}$ and $\left| w_i \right|_{\bar{a}} > \left| w_i \right|_{\bar{b}} +1$. By Lemma \ref{lm_post_not_reg_2}, this cannot happen and therefore neither $L^{p'}$ nor $post^* ( \mathcal{C} )$ are regular.
	
	It should be noted that $L_{up}^{p'}$ is not regular either. Indeed, from the definition of $\mathcal{P}$ and $\mathcal{C}$, it is clear that $\forall \langle p', w_u, w_l \rangle \in post^* ( \mathcal{C} ), w_l = \bot$, so $L_{up}^{p'}$ and $L^{p'}$ are in bijection. We have therefore proven the following theorem:
	
	\begin{theorem}
		\label{theo_upds_post_not_reg}
		There exist a UPDS $\mathcal{P}$ and a regular set of configurations $\mathcal{C}$ for which neither $post^* ( \mathcal{C} )$ nor $post^*_{up} ( \mathcal{C} )$ are regular.
	\end{theorem}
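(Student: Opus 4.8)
The plan is to exhibit a single UPDS $\mathcal{P}$ together with a regular set $\mathcal{C}$ and argue by contradiction, using the two preceding lemmas as a lower and an upper bound on the shape of $post^* ( \mathcal{C} )$. Concretely, I would work with the machine defined above, whose rules let the control first rewrite each $x$ into an $a$ and each $y$ into a $b$ while the copy rule $( C )$ only ever duplicates $a$'s, and whose pop rules move the discarded symbol onto the write-only upper stack. The role of the bottom marker $\bot$, together with the fact that rule $( E )$ is the only one touching it and that it switches the state to $p'$, is to freeze a reachable configuration exactly when the lower stack has been fully consumed. The two lemmas then pull in opposite directions: Lemma~\ref{lm_post_not_reg_1} guarantees that the non-regular-looking family $L = \lbrace \langle p', a^{n+1} b^{n}, \bot \rangle : n \in \mathbb{N} \rbrace$ lies inside $post^* ( \mathcal{C} )$, while Lemma~\ref{lm_post_not_reg_2} imposes on every reachable configuration the invariant $\left| w \right|_b + \left| w \right|_{\bar{b}} + 1 \geq \left| w \right|_a + \left| w \right|_{\bar{a}}$ on the combined word $w = \bar{w_u} w_l$.

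With these two facts in hand, I would invoke the pumping lemma. Assuming $post^* ( \mathcal{C} )$ regular, the definition of a regular set of configurations forces each per-state language to be regular; in particular $L^{p'} = \lbrace \bar{w_u} w_l : \langle p', w_u, w_l \rangle \in post^* ( \mathcal{C} ) \rbrace$ is regular and admits a pumping length $k$. I would then feed it the witness word $w = \overline{a^{k+1} b^{k}}\, \bot$, which belongs to $L^{p'}$ by Lemma~\ref{lm_post_not_reg_1}. The key observation is that any pumping decomposition $w = xyz$ with $\left| xy \right| \leq k$ must keep $x$ and $y$ entirely inside the leading block of barred $a$'s, since that block has length $k+1$; hence pumping up strictly increases $\left| w_i \right|_{\bar{a}}$ while leaving $\left| w_i \right|_{\bar{b}}$ unchanged. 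For $i$ large enough this violates the invariant of Lemma~\ref{lm_post_not_reg_2}, contradicting $w_i \in L^{p'} \subseteq post^* ( \mathcal{C} )$. Therefore $L^{p'}$, and with it $post^* ( \mathcal{C} )$, cannot be regular.

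Finally, to settle $post^*_{up} ( \mathcal{C} )$ I would note that, by construction of $\mathcal{P}$ and $\mathcal{C}$, every reachable configuration in state $p'$ has lower stack exactly $\bot$, so the upper-stack projection of $L^{p'}$ differs from $L^{p'}$ only by a fixed trailing symbol and the two are in bijection; non-regularity of the latter thus transfers to the former. The genuinely delicate point of the whole argument is not the pumping step itself but the design that makes both lemmas simultaneously true: the counterexample has to be engineered so that the symbols accumulated on the upper stack record a non-regular pattern that no later push can overwrite. This is exactly what the monotonicity of the $b$-count in Lemma~\ref{lm_post_not_reg_2} captures, and I expect establishing that invariant — ruling out any rule sequence that could erase a stored $b$ or add a spurious $a$ — to be the main obstacle.
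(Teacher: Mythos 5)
Your proposal is correct and follows essentially the same route as the paper's own proof: the same counterexample UPDS and starting set $\mathcal{C}$, the same use of Lemmas \ref{lm_post_not_reg_1} and \ref{lm_post_not_reg_2} as lower and upper bounds, the same pumping argument applied to the witness $w = \overline{a^{k+1} b^{k}}\,\bot$ in the per-state language $L^{p'}$, and the same bijection argument (every configuration in state $p'$ has lower stack $\bot$) to transfer non-regularity to $post^*_{up} ( \mathcal{C} )$. No gaps to report.
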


	\subsection{$pre^*$ is not regular}
	
	We now prove that $pre^*$ is not regular either. Let $\mathcal{P} = ( P, \Gamma, \Delta )$ be a UPDS with $P = \lbrace p \rbrace, \Gamma = \lbrace a, b, c \rbrace$, and $\Delta$ the following set of pushdown transitions:
	\[
	\begin{array}{cccc}
	( C_0 ) & ( p, c ) \rightarrow ( p, ab ) & ( R_a ) & ( p, a ) \rightarrow ( p, \varepsilon ) \\
	( C_1 ) & ( p, c ) \rightarrow ( p, cb ) & ( R_b ) & ( p, b ) \rightarrow ( p, \varepsilon )
	\end{array}
	\]
	We define the regular set of configurations $\mathcal{C} = \lbrace p \rbrace \times ( ab )^* \times \lbrace c\rbrace$ and again, compute a relevant subset of $pre^* ( \mathcal{C} )$:
	
	\begin{lemma}
		\label{lm_pre_not_reg_1}
		$L = \lbrace \langle p, b^n, c^n c \rangle, n \in \mathbb{N} \rbrace \subseteq pre^* ( \mathcal{C} )$.
	\end{lemma}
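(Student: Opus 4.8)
The plan is to realize each element of $L$ as a predecessor of a single, explicitly chosen configuration of $\mathcal{C}$. Writing $c^n c = c^{n+1}$, I would prove that $\langle p, b^n, c^{n+1} \rangle \Rightarrow \langle p, (ab)^n, c \rangle$; since $(ab)^n \in (ab)^*$ and the lower stack of the target is the single letter $c$, we have $\langle p, (ab)^n, c \rangle \in \mathcal{C}$, and hence $\langle p, b^n, c^n c \rangle \in pre^*(\mathcal{C})$, as required. This mirrors the explicit inductive computation used for Lemma \ref{lm_post_not_reg_1}, except that here the run must terminate inside a prescribed regular target rather than merely produce a witness.

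Rather than inducting on $n$ directly, I would establish the stronger invariant that for all $k, m \geq 0$, $\langle p, (ab)^k b^m, c^{m+1} \rangle \Rightarrow \langle p, (ab)^{k+m}, c \rangle$, the lemma being the instance $k = 0$. The engine of the induction is a single \emph{peeling} step that, for $m \geq 1$, converts one reservoir $b$ into a fresh $(ab)$-block at the bottom of the upper stack:
\[
\langle p, (ab)^k b^m, c^{m+1} \rangle \overset{C_1^{m-1} C_0 R_a R_b^m}{\Rightarrow} \langle p, (ab)^{k+1} b^{m-1}, c^m \rangle .
\]
Chaining this step $n$ times from $k = 0$, $m = n$, and using reflexivity for the base case $m = 0$, then yields the invariant.

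Verifying the peeling step is a routine computation with the UPDS semantics, which I would carry out in three phases. The $m-1$ copies of $C_1$ each consume one $b$ from the bottom of the upper stack while rewriting the top $c$ of the lower stack into $cb$, so that the top symbol stays $c$ and further pushes remain enabled; this produces $\langle p, (ab)^k b, c b^{m-1} c^m \rangle$. The single $C_0$ then burns the last reservoir $b$ and turns the top $c$ into $ab$, giving $\langle p, (ab)^k, a b^m c^m \rangle$. Finally $R_a$ followed by $R_b^m$ pops $a b^m$ onto the bottom of the upper stack, yielding $\langle p, (ab)^k a b^m, c^m \rangle = \langle p, (ab)^{k+1} b^{m-1}, c^m \rangle$. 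One checks along the way that the upper stack is never empty when a push fires, so the ``consume one letter'' branch of the push semantics always applies.

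The hard part is not the verification but the design of the trace, and it is exactly where a naive attempt fails. Because the upper stack is a write-only store whose bottom is fed by pop rules and eaten by push rules, processing the $c$'s with the obvious rules $C_0 R_a R_b$ scrambles the upper stack (one obtains strings such as $baab$) instead of a clean $(ab)^n$, and the resulting configuration does not lie in $\mathcal{C}$. The two ideas that make the run land in $\mathcal{C}$ are: using the copy rule $C_1$ to destroy the surplus $b$'s while preserving a $c$ on top, so that the stack pointer can keep moving left; and strengthening the induction hypothesis with an inert prefix $(ab)^k$ together with a $b^m$ reservoir, so that each peeling step disturbs neither the blocks already built nor the $c$'s reserved for later steps.
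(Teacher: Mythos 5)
Your proof is correct and follows essentially the same route as the paper: the paper's own one-line induction rests on exactly your peeling step (its stated fact $\langle p, b^{n+1}, c^{n+2} \rangle \Rightarrow \langle p, ab\, b^n, c^n c \rangle$ is your step with $k=0$, $m=n+1$, produced by your trace $C_1^{n} C_0 R_a R_b^{n+1}$), after which the induction hypothesis is applied underneath the inert $ab$ prefix. Your strengthened invariant with the explicit $(ab)^k$ prefix, and the check that no push ever fires on an empty upper stack, merely spell out rigorously what the paper leaves implicit.
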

	
	\begin{proof}
		By induction on $n$, we can prove that $\langle p, b^n, c^n c \rangle \Rightarrow \langle p, ( ab )^n, c \rangle$, proving the induction step by using the fact that $\langle p, b^{n+1}, c^{n+2} \rangle \Rightarrow \langle p, ab b^n, c^n c \rangle$.
	\end{proof}
	
	Given the rules of $\mathcal{P}$, the following lemma is verified:
	
	\begin{lemma}
		\label{lm_pre_not_reg_2}
		If $\langle p, b^m, c^n \rangle \Rightarrow^* \langle p, w_u, w_l \rangle$, then $\left| w_u \right|_a + \left| w_l \right|_a \leq n$.
	\end{lemma}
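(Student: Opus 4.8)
The plan is to prove a slightly stronger statement by induction on the length of the run, namely the invariant that the \emph{combined} number of occurrences of $a$ and $c$ in the whole stack never exceeds $n$. Concretely, I would show that every configuration $\langle p, w_u, w_l \rangle$ reachable from $\langle p, b^m, c^n \rangle$ satisfies
\[
\left| w_u \right|_a + \left| w_l \right|_a + \left| w_u \right|_c + \left| w_l \right|_c \leq n .
\]
Since $\left| w_u \right|_c + \left| w_l \right|_c \geq 0$, this invariant immediately yields $\left| w_u \right|_a + \left| w_l \right|_a \leq n$, which is exactly the lemma. The base case is trivial: the initial configuration $\langle p, b^m, c^n \rangle$ contains no $a$ and exactly $n$ letters $c$, so the combined count equals $n$.

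For the induction step, I would argue that no rule of $\mathcal{P}$ can increase this combined count, by inspecting the four rule types. The crucial observation is that the only rule producing a fresh $a$, namely $( C_0 ) : ( p, c ) \rightarrow ( p, ab )$, simultaneously destroys a $c$: it rewrites the top $c$ of the lower stack into $ab$, adding one $a$ while removing one $c$, so the combined $a$-plus-$c$ count contributed by the lower stack is unchanged. The other push rule $( C_1 ) : ( p, c ) \rightarrow ( p, cb )$ rewrites $c$ into $cb$, which adds only a letter $b$ and therefore leaves both the $a$-count and the $c$-count of the lower stack untouched. The two pop rules $( R_a )$ and $( R_b )$ only transfer the popped letter from the top of the lower stack to the bottom of the upper stack, so they preserve every letter count of the whole stack. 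Finally, whenever a push rule fires while the upper stack is nonempty, it additionally erases the bottom letter of the upper stack; this can only delete occurrences of letters and hence can only decrease the combined count. Combining these cases shows that the quantity above is non-increasing along any run.

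The step requiring the most care is the accounting for the push rules, since each push both rewrites the top of the lower stack and, when the upper stack is nonempty, erases its bottom letter. I would therefore treat the two semantic cases of a push rule separately: the boundary case of an empty upper stack (where $\langle p, \varepsilon, c\, w_l \rangle \Rightarrow \langle p, \varepsilon, \gamma' \gamma'' w_l \rangle$, with no upper-stack deletion) and the general case $\langle p, w_u x, c\, w_l \rangle \Rightarrow \langle p, w_u, \gamma' \gamma'' w_l \rangle$ (where the bottom letter $x$ is lost). In both cases one verifies directly that $\left| w_u \right|_a + \left| w_l \right|_a + \left| w_u \right|_c + \left| w_l \right|_c$ does not grow, the general case being even more favorable because of the extra erased letter. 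I expect this bookkeeping to be the only real obstacle; everything else is a routine induction, and once the invariant is in place the inequality $\left| w_u \right|_a + \left| w_l \right|_a \leq n$ follows at once.
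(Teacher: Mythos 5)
Your proof is correct and rests on the same key observation as the paper's: the only rule that can create an $a$, namely $(C_0)$, simultaneously destroys a $c$, and no rule of $\mathcal{P}$ ever creates a $c$. The paper phrases this as an informal counting argument (``one cannot create more $a$ than there were $c$ in the initial configuration''), while you package it as the non-increasing inductive invariant $\left| w_u \right|_a + \left| w_l \right|_a + \left| w_u \right|_c + \left| w_l \right|_c \leq n$ with a case analysis over the rule types --- a rigorous rendering of the same argument rather than a different approach.
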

	
	\begin{proof}
		The only rule that can add an $a$ to the whole stack is $C_0$ and it replaces a $c$ on the lower stack by $ab$. Hence, during a computation, one cannot create more $a$ than there were $c$ in the initial configuration. The inequality therefore holds.
	\end{proof}
	
	If $pre^* ( \mathcal{C} )$ is regular, so is $L^{p} = \lbrace \bar{w_u} w_l \mid \langle p, w_u, w_l \rangle \in pre^* ( \mathcal{C} ) \rbrace$, and by the pumping lemma, it admits a pumping length $k$. Moreover, by lemma $\ref{lm_pre_not_reg_1}$, $w = \bar{b^k} c^k c \in L^{p}$.
	
	If we apply the pumping lemma to $w$, there exist $x, y, z \in ( \Gamma \cup \bar{\Gamma} )^*$ such that $w = xyz$, $\left| xy \right| \leq k$, $\left| y \right| \geq 1$ and $w_i = xy^iz \in pre^* ( \mathcal{C} )$, $\forall i \geq 1$. As a consequence of $w$'s definition, $x, y \in \bar{b}^*$ and $z \in \bar{b}^* c^k c$.
	
	Since $w_i \in L^{p}$, $\forall i \geq 1$, there exists an integer $n_i$ such that $w_i \Rightarrow c_i = \overline{( ab )^{n_i}} c$. Moreover, the size of the stack must grow or remain constant during a computation, hence $\left| c_i \right| \geq \left| w_i \right|$ and $n_i \geq \frac{\left| w_i \right| - 1}{2}$. Since words in the sequence $( w_i )_i$ are unbounded in length, the sequence $( n_i )_i$ must be unbounded as well. However, by Lemma \ref{lm_pre_not_reg_2}, $n_i = \left| c_i \right|_{\bar{a}} \leq \left| w_i \right|_c = k + 1$.
	
	Hence, there is a contradiction and $pre^* ( \mathcal{C} )$ is not regular. 
	
	\begin{theorem}
		\label{theo_upds_pre_not_reg}
		There exist a UPDS $\mathcal{P}$ and a regular set of configurations $\mathcal{C}$ for which $pre^* ( \mathcal{C} )$ is not regular.
	\end{theorem}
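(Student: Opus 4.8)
The plan is to follow the same non-regularity schema used for $post^*$: fix a concrete UPDS and a regular set $\mathcal{C}$, isolate inside $pre^*(\mathcal{C})$ an infinite family of predecessors whose upper stack grows without bound while the lower stack stays short, and then contradict the pumping lemma for regular languages. I would reuse exactly the one-state gadget $\mathcal{P} = (\{p\}, \{a,b,c\}, \Delta)$ and the regular target $\mathcal{C} = \{p\} \times (ab)^* \times \{c\}$ introduced above, whose point is that the only rule creating an $a$ is $C_0$, and it must consume a $c$ of the lower stack to do so; the number of $a$'s producible along a run is therefore capped by the initial supply of $c$'s.

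First I would extract a usable witness family. By Lemma \ref{lm_pre_not_reg_1}, every configuration $\langle p, b^n, c^n c\rangle$ reaches a configuration of $\mathcal{C}$, so the whole-stack word $\bar{b}^{n} c^{n+1}$ belongs to $L^p = \{\bar{w_u} w_l \mid \langle p, w_u, w_l\rangle \in pre^*(\mathcal{C})\}$. These words exhibit an unbounded upper stack $b^n$ over a bounded lower stack $c^{n+1}$, which is precisely the asymmetry a finite automaton should be unable to track.

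Next I would set up the pumping argument. Assuming $pre^*(\mathcal{C})$ regular, $L^p$ is regular and admits a pumping length $k$; I would pump the witness $w = \bar{b}^{k} c^{k} c \in L^p$. Because $w$ opens with the block $\bar{b}^k$ and $\left| xy \right| \le k$, the repeated factor $y$ lies entirely in $\bar{b}^*$, so pumping up yields words $w_i = \bar{b}^{K_i} c^{k+1} \in L^p$ with $K_i \to \infty$ and, crucially, with the $c$-count frozen at $k+1$. Each $w_i$ is the encoding of a genuine predecessor $\langle p, b^{K_i}, c^{k+1}\rangle$, so there is a run from it to some $\langle p, (ab)^{n_i}, c\rangle \in \mathcal{C}$.

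The decisive step, and the one I expect to be the main obstacle, is to turn membership of $w_i$ in $L^p$ into a quantitative impossibility. On one hand, a UPDS transition never decreases the combined length $\left| w_u \right| + \left| w_l \right|$ of the two stacks (switch and pop preserve it, push increases it when the upper stack is empty and preserves it otherwise), so $2 n_i + 1 = \left| \overline{(ab)^{n_i}} c \right| \ge \left| w_i \right|$, forcing $n_i \to \infty$. On the other hand, Lemma \ref{lm_pre_not_reg_2} applied to the start configuration $\langle p, b^{K_i}, c^{k+1}\rangle$ bounds the number of $a$'s ever produced by the $c$-budget $k+1$, whence $n_i \le k+1$ for every $i$. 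These two conclusions are incompatible, so $L^p$, and therefore $pre^*(\mathcal{C})$, cannot be regular, which establishes the theorem. The care in this step is entirely in the length-monotonicity observation (to make $n_i$ blow up) and in confirming that the pumped word still matches the hypothesis of Lemma \ref{lm_pre_not_reg_2} with an unchanged $c$-budget.
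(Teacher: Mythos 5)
Your proposal is correct and is essentially identical to the paper's own proof: same gadget and target set, same witness family from Lemma~\ref{lm_pre_not_reg_1}, same pumping setup isolating $y \in \bar{b}^*$, the same stack-length monotonicity observation forcing $n_i \to \infty$, and the same application of Lemma~\ref{lm_pre_not_reg_2} to cap $n_i$ by the frozen $c$-budget $k+1$. The only difference is presentational (you spell out the case analysis behind monotonicity, which the paper states without proof), so there is nothing to add.
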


	\subsection{$post^*$ is context-sensitive}
	
	We prove that, if $\mathcal{C}$ is a regular set of configurations of a UPDS $\mathcal{P}$, then $post^* ( \mathcal{P}, \mathcal{C} )$ is context-sensitive. This implies that we can decide whether a single configuration is reachable from $\mathcal{C}$ or not.
	
	We show that the problem of computing $post^* ( \mathcal{P}, \mathcal{C} )$ can be reduced without loss of generality to the case where $\mathcal{C}$ contains a single configuration. To do so, we define a new UPDS $\mathcal{P'}$ by adding new states and rules to $\mathcal{P}$ such that any configuration $c$ in $\mathcal{C}$ can be reached from a single configuration $c_\$ = \langle p_\$, \varepsilon, \$ \rangle$. Once a configuration in $\mathcal{C}$ is reached, $\mathcal{P'}$ follow the same behaviour as $\mathcal{P}$.
	
	\begin{theorem}
		\label{theo_upds_post_simple}
		For each UPDS $\mathcal{P} = ( P, \Gamma, \Delta )$ and each regular set of configurations $\mathcal{C}$ on $ \mathcal{P}$, there exists a UPDS $\mathcal{P'} = ( P', \Gamma \cup \bar{\Gamma} \cup \lbrace \$ \rbrace, \Delta' )$, $P \subseteq P'$, and $p_\$ \in P' \setminus P$ such that $post^* ( \mathcal{P}, \mathcal{C} ) = post^* ( \mathcal{P'}, \lbrace \langle p_\$, \varepsilon, \$ \rangle \rbrace ) \cap ( P \times \Gamma^* \times \Gamma^* )$.
	\end{theorem}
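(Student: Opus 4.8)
The plan is to design $\mathcal{P'}$ so that, starting from $c_\$ = \langle p_\$, \varepsilon, \$ \rangle$, it first \emph{synthesises} on its lower stack an arbitrary word of $\mathcal{C}$ (in the encoding used to define regularity), then \emph{transfers} the upper-stack portion of that word to the upper stack, and finally enters a genuine state of $P$ in exactly the configuration we wanted to produce; from that point on $\mathcal{P'}$ simply runs the original rules of $\Delta$. Since $\mathcal{C}$ is regular, I would fix for each $p \in P$ the automaton $\mathcal{A}_p$ with $\mathcal{L}(\mathcal{A}_p) = \{\bar{w_u} w_l \mid \langle p, w_u, w_l\rangle \in \mathcal{C}\}$ and embed its states as fresh control states of $\mathcal{P'}$, disjoint from $P$. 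I would take $\Delta \subseteq \Delta'$ and add all the rules below, each guarded either by a fresh state or by the presence of a barred or $\$$ symbol, so that no added rule can fire once a genuine configuration in $P \times \Gamma^* \times \Gamma^*$ has been reached, and conversely no original rule of $\Delta$ can fire during the synthesis.

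\emph{Building phase.} Staying in fresh states and keeping the upper stack empty, I would write on the lower stack a word $\bar{u_1}\cdots\bar{u_k}\, l_1\cdots l_m$ accepted by some $\mathcal{A}_p$, where $w_u = u_1\cdots u_k$ and $w_l = l_1\cdots l_m$. Because push rules prepend to the top of the lower stack, this word is produced bottom-up, i.e. $\mathcal{A}_p$ is simulated in reverse: a transition $q \xrightarrow{\sigma} q'$ becomes, for every possible current top symbol $\gamma$, a push rule $(\langle q'\rangle,\gamma)\to(\langle q\rangle,\sigma\gamma)$ that prepends $\sigma$ and updates the embedded state from $q'$ to $q$ (with upper stack empty, this is exactly the replacement of the top by $\sigma\gamma$). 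The very first symbol is laid down by a switch rule rewriting the initial marker $\$$ into the last letter of the guessed word while setting the embedded state to an accepting state of $\mathcal{A}_p$; this is also how $\$$ is disposed of, so that it never survives on the lower stack. The phase ends when the embedded state reaches the initial state of $\mathcal{A}_p$. Crucially, the upper-stack portion is written with the \emph{barred} copies from $\bar\Gamma$, which is precisely why $\mathcal{A}_p$ is read over $\bar\Gamma \cup \Gamma$ and why $\bar\Gamma$ belongs to the alphabet of $\mathcal{P'}$.

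\emph{Transfer phase and correctness.} From a fresh state I would add rules that repeatedly switch the top barred letter $\bar\gamma$ into $\gamma$ and immediately pop it, moving each $u_i$ in order to the bottom of the upper stack; the barring thus acts as a boundary marker telling $\mathcal{P'}$ where $w_u$ ends and $w_l$ begins. When the top of the lower stack first becomes unbarred, the upper stack holds exactly $w_u$ and the lower stack exactly $w_l$, and a final rule switches the control state to the matching $p \in P$, yielding precisely $\langle p, w_u, w_l\rangle \in \mathcal{C}$. Since $\mathcal{A}_p$ recognises exactly the encodings of the configurations of $\mathcal{C}$, the set of genuine configurations produced this way is exactly $\mathcal{C}$; and since, once in a genuine configuration, $\mathcal{P'}$ can only apply the rules of $\Delta$, the restriction of $\mathcal{P'}$ to $P \times \Gamma^* \times \Gamma^*$ behaves exactly as $\mathcal{P}$. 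Both inclusions follow: $post^*(\mathcal{P'}, \{c_\$\})$ meets $P \times \Gamma^* \times \Gamma^*$ in exactly $post^*(\mathcal{P}, \mathcal{C})$.

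The main obstacle is genuinely the upper stack: because it is write-only and can only be fed by pops, one cannot write $w_u$ directly, and the construction is forced to route $w_u$ through the lower stack and pop it back — which is what forces the reliable $w_u$/$w_l$ boundary marker (the barred alphabet) and a mechanism to eliminate the auxiliary symbol $\$$ instead of leaving it on the lower stack. I expect the delicate points to be the degenerate cases: when $w_l = \varepsilon$ the transfer phase must commit to $p$ on the last pop, before the lower stack empties and every rule becomes disabled; and the configuration $\langle p, \varepsilon, \varepsilon\rangle$ has total stack size $0$ and is therefore unreachable from the length-$1$ start $c_\$$, since the total stack size is non-decreasing along any run. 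These are handled by dedicated boundary rules and, if necessary, by treating the empty configuration separately.
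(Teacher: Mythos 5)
Your construction is essentially the paper's own proof: the paper also synthesises $\bar{w_u} w_l$ bottom-up on the lower stack by embedding an automaton for $\mathcal{C}$ into fresh control states (it runs an automaton for the mirrored language $\lbrace ( \bar{w_u} w_l )^R \rbrace$ forward where you run the automaton for the language itself backward, which is the same thing), disposes of $\$$ via the initial switch rule, transfers the barred prefix to the upper stack by switch-then-pop, and enters $p$ through a final switch rule reading an unbarred top symbol. The degenerate cases you flag are moreover genuine and not handled by the paper: its entry rule $( f_p, x ) \rightarrow ( p, x )$, $x \in \Gamma$, cannot fire once the lower stack is empty, so its construction misses every $\langle p, w_u, \varepsilon \rangle \in \mathcal{C}$ (your ``enter $p$ on the last pop'' boundary rules repair this), and your stack-size monotonicity argument shows that $\langle p, \varepsilon, \varepsilon \rangle \in \mathcal{C}$ makes the theorem false as stated for any choice of $\mathcal{P'}$.
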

	
	\begin{proof}
		Our intuition is to build configurations in $\mathcal{C}$ in three steps: from $c_\$$, push the word $\bar{w_u} w_l$ on the stack by using push rules mimicking a finite automaton accepting the regular set $\mathcal{C}$, switch each symbol in $\bar{\Gamma}$ to its equivalent letter in $\Gamma$ and then pop it in order to write $w_u$ on the upper stack, then move to the right state $p$.
		
		Since $\mathcal{C}$ is regular, so is $\forall p \in p$ the language $\lbrace \bar{w_u} w_l \mid \langle p, w_u, w_l \rangle \in \mathcal{C} \rbrace$. Consider $\mathcal{A}_p = ( \Gamma \cup \bar{\Gamma}, Q_p, E_p, I_p, F_p )$ such that $\mathcal{L} ( \mathcal{A}_p ) = \lbrace ( \bar{w_u} w_l )^R \mid \langle p, w_u, w_l \rangle \in \mathcal{C} \rbrace$. The mirror image is needed because the bottom of lower stack should be pushed first and the top of upper stack last. Without loss of generality, we suppose that $I_p = \lbrace i_p \rbrace$, $F_p = \lbrace f_p \rbrace$, $p_{\$} \notin Q_p$, $Q_p \cap P = \emptyset$ and that no edge in $E_p$ ends in $i_p$ nor starts in $f_p$.
		
		We define the UPDS $\mathcal{P'}_p = ( Q_p \cup \lbrace p_{\$}, p, p_\tau \rbrace, \Gamma \cup \bar{\Gamma} \cup \lbrace \$ \rbrace, \Delta'_p )$, where $p_\tau \notin Q_p$ and the following rules belong to $\Delta'_p$:
		\begin{description}
			\item[Rules from $\mathcal{A}_p$:] for all $x \in \Gamma \cup \bar{\Gamma}$, if $q \overset{x}{\rightarrow^*_{E_p}} q'$ in the automaton $\mathcal{A}_p$, then $(p_\$, \$ ) \rightarrow ( q, x ) \in \Delta'_p$ if $q = i_p$, and $y \in \Gamma \cup \bar{\Gamma}$, $(q, y ) \rightarrow ( q', xy ) \in \Delta'_p$ otherwise. These rules are used to build the stack and mimicks transitions in $\mathcal{A}_p$; symbols that will end on the upper stack are stored on the lower stack.
			
			\item[Setting the upper stack:] for all $\bar{x} \in \bar{\Gamma}$, $( f_p, \bar{x} ) \rightarrow ( p_\tau, x ) \in \Delta'_p$ and $( p_\tau, x ) \rightarrow ( f_p, \varepsilon ) \in \Delta'_p$. Each symbol $\bar{x}$ on the top of the lower stack is switched to its equivalent symbol in $\Gamma$ then popped in order to end on the upper stack.
			
			\item[Moving to state $p$:] for all $x \in \Gamma$, $( f_p, x ) \rightarrow ( p, x ) \in \Delta'_p$. Once the upper stack has been defined and the lower stack is being read, the UPDS moves to state $p$ in order to end in a configuration in $\mathcal{C}$.
		\end{description}
		
		The UPDS $\mathcal{P'}_p$ follows the three steps previously outlined: push $\bar{w_u} w_l$ on the lower stack, so that it is in a configuration $\langle f_p, \varepsilon, \bar{w_u} w_l \rangle$, then move to $\langle p_\tau, w_u, w_l \rangle$ by switching and popping the symbols in $\bar{\Gamma}$, and end in $\langle p, w_u, w_l \rangle$.
		
		We then introduce $\mathcal{P'} = ( \underset{p \in P}{\bigcup} P'_p \cup P, \Gamma \cup \bar{\Gamma} \cup \lbrace \$ \rbrace, \underset{p \in P}{\bigcup} \Delta'_p \cup \Delta )$. From $c_\$$, $\mathcal{P'}$ can reach any configuration of $\mathcal{C}$ using the rules of the automata $( \mathcal{P'}_p )_{p \in P}$, then follow the rules of $\mathcal{P}$. $\mathcal{P'}$ therefore satisfies Theorem \ref{theo_upds_post_simple}.
	\end{proof}
	
	Using this theorem, we can focus on the single starting configuration case. We assume for the rest of this subsection that $\mathcal{C} = \lbrace c_\$ \rbrace$, $p_\$ \in P$, and $\$ \in \Gamma$. We now formally define context-sensitive grammars:
	
	\begin{definition}
		A \emph{grammar} $\mathcal{G}$ is a quadruplet $( N, \Sigma, R, S )$ where $N$ is a finite set of non terminal symbols, $\Sigma$ a finite set of terminal symbols with $N \cap \Sigma = \emptyset$, $R \subseteq ( N \cup \Sigma )^* N ( N \cup \Sigma )^* \times ( N \cup \Sigma )^*$ a finite set of production rules, and $S \in N$ a start symbol.
	\end{definition}
	
	We define the one-step derivation relation $\dashrightarrow_\mathcal{G}$ on a given grammar $\mathcal{G}$: if $\exists p, q \in ( N \cup \Sigma )^*$, $p \rightarrow q \in R$ then $\forall x = upv, y = uqv \in ( N \cup \Sigma )^*$, $x \dashrightarrow_\mathcal{G} y$. The derivation relation $\dashrightarrow_\mathcal{G}^*$ is its transitive closure. The language $\mathcal{L} ( \mathcal{G} )$ of a grammar is the set $\lbrace w \in \Sigma^* \mid S\dashrightarrow_\mathcal{G}^* w \rbrace$. We may omit the variable $\mathcal{G}$ when only a single grammar is being considered.
	
	A grammar is said to be \emph{context-sensitive} if each productions rule $r \in R$ is of the form $\alpha A \beta \rightarrow \alpha \gamma \beta$ where $ \alpha, \beta \in ( N \cup \Sigma )^*, \gamma \in ( N \cup \Sigma )^+$, and $A \in N$. A language $L$ is said to be context-sensitive if there exists a context-sensitive grammar $\mathcal{G}$ such that $\mathcal{L} ( \mathcal{G} ) = L$.
	
	The following theorem is a well-know property of context-sensitive languages detailed in \cite{hopcroft-2000}:
	\begin{theorem}
		\label{theo_contsens_dec}
		Given a context-sensitive language $L$ and a word $w \in \Sigma^*$, we can effectively decide whether $w \in L$ or not.
	\end{theorem}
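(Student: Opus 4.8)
The plan is to exploit the defining feature of context-sensitive grammars, namely that their productions are \emph{non-contracting}. A rule $\alpha A \beta \rightarrow \alpha \gamma \beta$ with $\gamma \in ( N \cup \Sigma )^+$ satisfies $\left| \alpha A \beta \right| = \left| \alpha \right| + 1 + \left| \beta \right| \leq \left| \alpha \right| + \left| \gamma \right| + \left| \beta \right| = \left| \alpha \gamma \beta \right|$, since $\left| \gamma \right| \geq 1$. Hence a single application of $\dashrightarrow_\mathcal{G}$ can never decrease the length of a sentential form, and this monotonicity is the engine of the whole argument. First I would fix the input word $w$ with $n = \left| w \right|$ and a context-sensitive grammar $\mathcal{G} = ( N, \Sigma, R, S )$ with $\mathcal{L} ( \mathcal{G} ) = L$. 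If $n = 0$ the answer is immediate: since $S \in N$ has length $1$ and lengths never decrease along a derivation, the empty word can never be derived, so $\varepsilon \notin L$; I therefore assume $n \geq 1$.

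Next I would observe that any derivation $S \dashrightarrow_\mathcal{G}^* w$ visits only sentential forms of length at most $n$. Indeed such a derivation starts at $S$ (length $1$) and ends at $w$ (length $n$), and by the non-contraction property lengths are non-decreasing along it; hence every intermediate string has length between $1$ and $n$. Consequently, to decide membership it suffices to search for a derivation confined to the finite set $V_n = \lbrace u \in ( N \cup \Sigma )^* \mid 1 \leq \left| u \right| \leq n \rbrace$, which contains at most $\sum_{i = 1}^{n} \left| N \cup \Sigma \right|^i$ strings.

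I would then build the finite directed graph whose vertex set is $V_n$ and which has an edge $u \rightarrow u'$ whenever $u \dashrightarrow_\mathcal{G} u'$ for some rule of $R$. This graph is effectively constructible: $V_n$ is finite, $R$ is finite, and testing whether $u \dashrightarrow_\mathcal{G} u'$ amounts to scanning the finitely many factorizations of $u$ as a left-hand side of a rule surrounded by a prefix and a suffix, then checking whether the corresponding substitution yields $u'$. Membership $w \in L$ then holds if and only if $w$ is reachable from $S$ in this graph, which I would decide by a breadth-first search from $S$. The search terminates because $V_n$ is finite, and it is correct precisely by the confinement observation of the previous paragraph.

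The only delicate point, and the step I would treat as the crux, is this confinement claim: it rests entirely on the non-contracting shape of context-sensitive rules as fixed in the definition, i.e. on $\gamma \in ( N \cup \Sigma )^+$ being nonempty. Were $\varepsilon$-productions permitted, sentential forms could shrink below length $n$ and the bounded search would be unsound; the definition adopted here excludes this, so beyond the trivial $w = \varepsilon$ case no extra handling is needed. Everything else is a routine terminating reachability computation over a finite, effectively computable graph.
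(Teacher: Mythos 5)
Your proof is correct. Note that the paper does not actually prove this theorem: it states it as a well-known property of context-sensitive languages and delegates to \cite{hopcroft-2000}, and your argument — non-contracting rules bound every sentential form in a derivation of $w$ by $\left| w \right|$, so membership reduces to reachability in the finite, effectively constructible graph of sentential forms of length at most $\left| w \right|$ — is precisely the classical proof found in that reference, including the correct dismissal of the $w = \varepsilon$ case under the paper's $\varepsilon$-free definition of context-sensitive grammars.
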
 
	
	We can compute a context-sensitive grammar recognizing $post^*$. Our intuition is to represent a configuration $\langle p, w_u, w_l \rangle$ of $\mathcal{P}$ by a word $\top w_u p w_l \bot$ of a grammar $\mathcal{G}$. We use Theorem \ref{theo_upds_post_simple} so that the single start symbol of $\mathcal{G}$ can be matched to a single configuration $c_\$$. The context-sensitive rules of $\mathcal{G}$ mimic the transitions of the UPDS. As an example, a rule $\delta = ( p, a ) \rightarrow ( p', \varepsilon ) \in \Delta_{pop}$ can be modelled by three rules $p a \dashrightarrow_\mathcal{G} p g_\delta$, $ p g_\delta \dashrightarrow_\mathcal{G} a g_\delta$, and $a g_\delta \dashrightarrow_\mathcal{G} a p'$ such that $p a \dashrightarrow^*_\mathcal{G} a p'$, where $\dashrightarrow_\mathcal{G}$ stands for the one-step derivation relation and $g_\delta$ is a nonterminal symbol of $\mathcal{G}$.
	
	Let us define this context-sensitive grammar $\mathcal{G} = ( N, \Sigma, R, S )$ more precisely:
	\begin{description}
		\item[Start symbol:] $S$ is the start symbol.
		
		\item[Nonterminal symbols:] let $N = \left\{ S \right\} \cup \bar{\Gamma} \cup \bar{P} \cup \Delta_{switch} \cup \Delta_{pop} \cup \Delta_{push} \times \left\{ 0, 1 \right\}$. $\bar{P}$ is a disjoint copy (bijection) of the state alphabet $P$. In order to properly simulate transitions rules in $\Delta$ with context-sensitive production rules, nonterminal symbols related to these transitions are needed.
		
		\item[Terminal symbols:] $\Sigma = \lbrace \top, \bot \rbrace \cup P \cup \Gamma$.
		
		\item[Production rules:] $R = R_{\mathcal{P}} \cup R_{final} \cup \lbrace S \rightarrow \top \bar{p_\$} \bar{\$} \bot \rbrace$; the last rule initializes the starting configuration of $\mathcal{P}$.
	\end{description}
	
	The production rules in $R_\mathcal{P}$ simulate the semantics of the UPDS as defined by its transition rules $\Delta$. For each switch rule $\delta: ( p, a ) \rightarrow ( p', b ) \in \Delta_{switch}$, the following grammar rules belong to $R_\mathcal{P}$ in order to allow $p a \dashrightarrow^*_\mathcal{G} p' b$:
	\[
	\begin{array}{cccccc}
	( r_0^\delta ) & \bar{p} \bar{a} \rightarrow \delta \bar{a} & ( r_1^\delta ) & \delta \bar{a} \rightarrow \delta \bar{b} & ( r_f^\delta ) & \delta \bar{b} \rightarrow \bar{p'} \bar{b} 
	\end{array}
	\]
	
	For each pop rule $\delta: ( p, a ) \rightarrow ( p', \varepsilon ) \in \Delta_{pop}$, the following grammar rules belong to $R_\mathcal{P}$ in order to allow $p a \dashrightarrow^*_\mathcal{G} a p'$:
	\[
	\begin{array}{cccccc}
	( r_0^\delta ) & \bar{p} \bar{a} \rightarrow \bar{p} \delta & ( r_1^\delta ) & \bar{p} \delta \rightarrow \bar{a} \delta & ( r_f^\delta ) & \bar{a} \delta \rightarrow \bar{a} \bar{p'}
	\end{array}
	\]
	
	For each push rule $\delta: ( p, a ) \rightarrow ( p', bc ) \in \Delta_{push}$, the following grammar rules belong to $R_\mathcal{P}$ in order to allow $x p a \dashrightarrow^*_\mathcal{G} p' bc$ and $\top p a \dashrightarrow^*_\mathcal{G} \top p' bc$:
	\[
	\begin{array}{cccc}
	( r_0^\delta ) & \bar{p} \bar{a} \rightarrow \delta_0 \bar{a} & \forall x \in \Gamma, ( r_1^{\delta, x} ) & \bar{x} \delta_0 \rightarrow \delta_1 \delta_0 \\
	( r_1^{\delta, \top} ) & \top \delta_0 \rightarrow \top \delta_1 \delta_0 &
	( r_2^\delta ) & \delta_1 \delta_0 \bar{a} \rightarrow \delta_1 \delta_0 \bar{c} \\
	( r_3^\delta ) & \delta_1 \delta_0 \bar{c} \rightarrow \delta_1 \bar{b} \bar{c} & ( r_f^\delta ) & \delta_1 \bar{b} \bar{c} \rightarrow \bar{p'} \bar{b} \bar{c}
	\end{array}
	\]
	
	It is worth noting that, once a production rule $r_0^\delta$ has been applied, there is no other derivation possible in $R_{\mathcal{P}}$ but to apply the other production rules $( r_i^\delta )_i$ in the order they've been defined until a state symbol in $P$ has been written again by $r_f^\delta$. This sequence simulates a single transition rule of the UPDS $\mathcal{P}$.
	
	Finally, the rules in $R_{final}$ merely switch symbols in $\Gamma \cup P$ to their equivalent letters in $\Sigma$ in order to generate a terminal word, starting with the state symbol to prevent any further use of $R_\mathcal{P}$:
	\[
	\begin{array}{ccc}
	\forall p \in P & ( r_p^{final} ) & \bar{p} \rightarrow p \\
	\forall x, y \in \Gamma \cup P & ( r_{\bar{x},y}^{final} ) & bar{x} y \rightarrow x y \\
	\forall x, y \in \Gamma \cup P & ( r_{y,\bar{x}}^{final} ) & y \bar{x} \rightarrow y x
	\end{array}
	\]
	Once a rule $ r_p^{final}$ has been applied, the only production rules available for further derivations are in $R_{final}$.
	
	We prove that $\mathcal{L} ( \mathcal{G} )$ is in bijection with $post^* ( \lbrace c_\$ \rbrace )$.
	
	\begin{lemma}
		\label{lm_post_dec_1}
		If $\langle p, w_u, w_l \rangle \in post^{*} ( \lbrace c_\$ \rbrace )$, then from $S$ we can derive the nonterminal word $\top \bar{w_u} \bar{p} \bar{w_l} \bot$ in $\mathcal{G}$, and $\top w_u p w_l \bot \in \mathcal{L} ( \mathcal{G} )$.
	\end{lemma}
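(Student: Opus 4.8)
The plan is to proceed by induction on the length $n$ of a run $c_\$ \Rightarrow^n \langle p, w_u, w_l \rangle$ witnessing membership in $post^* ( \lbrace c_\$ \rbrace )$. I would first establish the stronger statement that $S \dashrightarrow^*_\mathcal{G} \top \bar{w_u} \bar{p} \bar{w_l} \bot$, and then obtain the terminal word $\top w_u p w_l \bot \in \mathcal{L} ( \mathcal{G} )$ as an immediate corollary. For the base case $n = 0$, the only reachable configuration is $c_\$ = \langle p_\$, \varepsilon, \$ \rangle$, and the single rule $S \rightarrow \top \bar{p_\$} \bar{\$} \bot$ produces exactly its nonterminal encoding, as required.

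For the induction step, suppose $c_\$ \Rightarrow^n \langle p'', w_u'', w_l'' \rangle \overset{\delta}{\rightarrow} \langle p, w_u, w_l \rangle$. By the induction hypothesis, $S \dashrightarrow^*_\mathcal{G} \top \bar{w_u''} \bar{p''} \bar{w_l''} \bot$, so it suffices to show that this nonterminal word derives $\top \bar{w_u} \bar{p} \bar{w_l} \bot$ using only the block of production rules attached to $\delta$ in $R_\mathcal{P}$. I would then split on the type of $\delta$. For a switch rule, applying $r_0^\delta, r_1^\delta, r_f^\delta$ in order rewrites the local pattern $\bar{p''} \bar{a}$ into $\bar{p} \bar{b}$ while leaving the surrounding symbols untouched. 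For a pop rule, the analogous three rules transform $\bar{p''} \bar{a}$ into $\bar{a} \bar{p}$, which is exactly the effect of appending the popped letter to the bottom (right end) of the upper stack. For a push rule, there are two sub-cases depending on whether $w_u''$ is empty: in both, $r_0^\delta$ introduces $\delta_0$, after which either $r_1^{\delta, x}$ consumes the barred letter $\bar{x}$ sitting immediately to the left of $\delta_0$ (the bottom of the upper stack) or $r_1^{\delta, \top}$ fires against the left boundary marker $\top$; the remaining rules $r_2^\delta, r_3^\delta, r_f^\delta$ then rewrite $\delta_1 \delta_0 \bar{a}$ into $\bar{p} \bar{b} \bar{c}$.

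The step I expect to require the most care is the push case. I must check that after $r_0^\delta$ is applied, the symbol immediately to the left of $\delta_0$ is precisely what the next rule expects: since the induction hypothesis guarantees a clean encoding in which everything to the left of $\bar{p''}$ is $\top$ followed by barred stack letters, that symbol is either some $\bar{x}$ with $x \in \Gamma$, matched by $r_1^{\delta, x}$, or $\top$ itself, matched by $r_1^{\delta, \top}$. These two possibilities correspond exactly to the nonempty and empty upper stack sub-cases of the UPDS push semantics, and I would verify that consuming one $\bar{x}$ in the nonempty case agrees with deleting a single letter from the bottom of the upper stack. Because we only need to exhibit one derivation, existence of this scripted sequence of rule applications suffices, and I need not argue that every derivation behaves well.

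Finally, for the terminal word, I would start from $\top \bar{w_u} \bar{p} \bar{w_l} \bot$ and apply the rules of $R_{final}$. First $r_p^{final}$ converts $\bar{p}$ into the terminal $p$, which by design prevents any further application of $R_\mathcal{P}$ (whose initiating rules require a barred state symbol). Then the context rules $r_{\bar{x}, y}^{final}$ and $r_{y, \bar{x}}^{final}$ propagate outward from the now-terminal state symbol, finalizing the lower stack from left to right and the upper stack from right to left, which yields $\top w_u p w_l \bot \in \mathcal{L} ( \mathcal{G} )$ and completes the proof.
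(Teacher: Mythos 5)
Your proof is correct and follows essentially the same route as the paper's: induction on the length of the run, with the base case given by the rule $S \rightarrow \top \bar{p_\$} \bar{\$} \bot$, the induction step simulating each transition $\delta$ by its scripted block $r_0^\delta, \dots, r_f^\delta$, and a final pass with $R_{final}$ to produce the terminal word. The only difference is that you spell out the case analysis (switch, pop, push with empty/nonempty upper stack) that the paper leaves implicit, which is a welcome elaboration rather than a deviation.
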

	
	\begin{proof} By induction on $n$, we must prove that if $c_\$ \Rightarrow^n \langle p, w_u, w_l \rangle$, then we can derive in $\mathcal{G}$ the nonterminal word $\top \bar{w_u} \bar{p} \bar{w_l} \bot$.
		\begin{description}
			\item[Basis:] we have $S \rightarrow \top \bar{p_\$} \bar{\$} \bot$.
			
			\item[Induction step:] if $c \Rightarrow^n \langle p, w_u, w_l \rangle \overset{\delta}{\Rightarrow} \langle p', w'_u, w'_l \rangle$, then the nonterminal word $\top \bar{w_u} \bar{p} \bar{w_l} \bot$ can be derived from $S$ in $\mathcal{G}$ by the induction hypothesis. From this word, we can further derive $\top \bar{w'_u} \bar{p'} \bar{w'_l} \bot$ using the production rules $r_0^\delta, \dots, r_f^\delta$ associated with the transition rule $\delta$.
		\end{description}
		Finally, from any non terminal word of the form $\top \bar{w_u} \bar{p} \bar{w_l} \bot$, we can derive in $\mathcal{G}$ a terminal $\top w_u p w_l \bot$ using rules in $R_{final}$.
	\end{proof}
	
	Moreover, by design of the grammar $\mathcal{G}$, the following lemma holds:
	
	\begin{lemma}
		\label{lm_post_dec_2}
		If $S \dashrightarrow^* \top w_u p w_l \bot$, $w_u, w_l \in \Gamma^*$, $p \in P$, then $\langle p, w_u, w_l \rangle \in post^* ( \lbrace c_\$ \rbrace )$. 
	\end{lemma}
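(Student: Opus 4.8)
The plan is to prove the converse of Lemma~\ref{lm_post_dec_1}: every terminal word of the prescribed shape $\top w_u p w_l \bot$ is produced by a derivation that faithfully simulates a run of $\mathcal{P}$ from $c_\$$. Call a sentential form \emph{clean} if it has the form $\top \bar{v_u} \bar{q} \bar{v_l} \bot$ with $v_u, v_l \in \Gamma^*$ and $q \in P$; such a form encodes the configuration $\langle q, v_u, v_l \rangle$, and the word produced by the single rule $S \rightarrow \top \bar{p_\$} \bar{\$} \bot$ is clean and encodes $c_\$$. The core of the argument is a \emph{forcedness} property: from a clean form the only productive rules of $R_\mathcal{P}$ are the openers $r_0^\delta$ (for a transition $\delta$ whose source state is $q$ and whose read letter is the first letter of $\bar{v_l}$), and once an opener has fired, the only derivation that can ever reach a terminal word is the one that applies $r_1^\delta, r_2^\delta, \dots$ in the order defined, up to the final rule $r_f^\delta$ that writes a fresh barred state symbol and returns to a clean form.

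First I would establish this forcedness by a finite case analysis over the three rule groups. The intermediate sentential forms carry the dedicated markers $\delta$ (switch, pop) or $\delta_0, \delta_1$ (push), which occur in no rule of another transition; hence no rule of a different block can interfere, and no second opener can fire, since an opener requires a barred state symbol immediately followed by a barred stack letter --- a pattern the marker destroys. A direct inspection shows that in each intermediate form the left-hand side of the next rule in the sequence is matched, and that any \emph{other} applicable rule strands a marker that no subsequent rule can erase, so it cannot be completed to a terminal word. The only genuine branch arises in the pop case, where $r_0^\delta$ leaves the barred state symbol $\bar{q}$ in place so that $r_q^{final}$ is also applicable; but taking that branch freezes the marker $\delta$ forever, confirming that only the prescribed continuation is viable. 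Each block $\top \bar{v_u}\bar{q}\bar{a}\bar{v_l}\bot \dashrightarrow^* \top \bar{v'_u}\bar{q'}\bar{v'_l}\bot$ is thereby forced and realizes exactly the immediate successor relation $\langle q, v_u, a v_l \rangle \overset{\delta}{\Rightarrow} \langle q', v'_u, v'_l \rangle$.

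The push group is the delicate one, and I would treat it with care. After $r_0^\delta$ the marker $\delta_0$ sits between $\bar{v_u}$ and the read letter; its left neighbour is either the bottom barred letter $\bar{x}$ of the upper stack or the delimiter $\top$, and this neighbour selects exactly one of the scanning rules $r_1^{\delta, x}$ or $r_1^{\delta, \top}$. I would check that the two subcases reproduce the two clauses of the push semantics: when a letter $\bar{x}$ is present it is overwritten, matching the deletion of the bottom of a non-empty upper stack, whereas against $\top$ nothing is removed, matching the push on an empty upper stack. After this single scanning step the subsequent rules $r_2^\delta, r_3^\delta, r_f^\delta$ are again forced. Finally I would observe that the transition to terminals is an absorbing \emph{final phase}: from a clean form the only alternative to an opener is $r_q^{final}$, after which no rule of $R_\mathcal{P}$ can ever fire again and the remaining $R_{final}$ rules merely un-bar the surrounding letters, yielding the unique terminal word $\top v_u q v_l \bot$ that encodes the same configuration.

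Putting these pieces together, any derivation $S \dashrightarrow^* \top w_u p w_l \bot$ factors as $S \rightarrow (\text{clean}_0) \dashrightarrow^* (\text{clean}_1) \dashrightarrow^* \cdots \dashrightarrow^* (\text{clean}_k) \dashrightarrow^* \top w_u p w_l \bot$, where $(\text{clean}_0)$ encodes $c_\$$, each intermediate block is one UPDS transition, and the last segment is the final phase that encodes $\langle p, w_u, w_l \rangle$. An induction on $k$ then gives $c_\$ \Rightarrow^* \langle p, w_u, w_l \rangle$, which is precisely $\langle p, w_u, w_l \rangle \in post^* ( \lbrace c_\$ \rbrace )$. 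I expect the main obstacle to be the forcedness argument itself, and in particular the push case, where one must rule out the marker $\delta_0$ wandering or the scanning rule firing more than once, and match the two scanning subcases exactly to the two clauses of the push rule.
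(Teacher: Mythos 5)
Your proof is correct and takes essentially the same route the paper itself relies on: the paper gives no separate proof of this lemma, but justifies it "by design of the grammar" through exactly the two forcedness observations you formalize (once an opener $r_0^\delta$ fires, the only way to ever produce a terminal word is to complete the block $r_1^\delta,\dots,r_f^\delta$ in order; once $r_p^{final}$ fires, only $R_{final}$ rules remain applicable). Your elaboration---clean sentential forms, atomicity of blocks, the dead-end branch in the pop case, and the two mutually exclusive scanning subcases matching the two clauses of the push semantics---is a faithful, rigorous filling-in of that same argument.
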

	
	Hence, the following result holds:
	
	\begin{theorem}
		\label{theo_post_grammar}
		Given a UPDS $\mathcal{P}$ and a regular set of configurations $\mathcal{C}$, we can compute a context-sensitive grammar $\mathcal{G}$ such that $\langle p, w_u, w_l \rangle \in post^* ( \mathcal{P}, \mathcal{C} )$ if and only if $\top w_u p w_l \bot \in \mathcal{L} ( \mathcal{G} )$
	\end{theorem}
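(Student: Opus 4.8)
The plan is to obtain Theorem~\ref{theo_post_grammar} by assembling three ingredients already in place: the reduction of the general case to a single starting configuration (Theorem~\ref{theo_upds_post_simple}), the two inclusions linking derivations of $\mathcal{G}$ and runs of the UPDS (Lemmas~\ref{lm_post_dec_1} and~\ref{lm_post_dec_2}), and a purely syntactic check that $\mathcal{G}$ is context-sensitive so that Theorem~\ref{theo_contsens_dec} yields decidability of membership. First I would confirm that every production rule fits the non-contracting template $\alpha A \beta \rightarrow \alpha \gamma \beta$ with $A \in N$ and $\gamma \neq \varepsilon$: the switch and pop triples rewrite a single marker inside a fixed two-letter context and preserve length; the push family preserves length except for $(r_1^{\delta,\top})$ and the initial rule $S \rightarrow \top \bar{p_\$} \bar{\$} \bot$, which are strictly length-increasing yet still match the template with a longer $\gamma$; and the rules of $R_{final}$ relabel one symbol in context. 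Since no rule erases a symbol, $\mathcal{G}$ is context-sensitive.

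Next, for the single-configuration case $\mathcal{C} = \lbrace c_\$ \rbrace$, I would fuse the two lemmas into a single equivalence. Lemma~\ref{lm_post_dec_1} gives the forward direction: each $\langle p, w_u, w_l \rangle \in post^*(\lbrace c_\$ \rbrace)$ has its encoding $\top \bar{w_u} \bar{p} \bar{w_l} \bot$ derivable from $S$, whence the terminal word $\top w_u p w_l \bot$ lies in $\mathcal{L}(\mathcal{G})$ via $R_{final}$. Lemma~\ref{lm_post_dec_2} supplies the converse. Together they give $\langle p, w_u, w_l \rangle \in post^*(\lbrace c_\$ \rbrace)$ if and only if $\top w_u p w_l \bot \in \mathcal{L}(\mathcal{G})$. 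To lift this to an arbitrary regular $\mathcal{C}$, I would invoke Theorem~\ref{theo_upds_post_simple} to produce a UPDS $\mathcal{P'}$ and a configuration $c_\$$ with $post^*(\mathcal{P}, \mathcal{C}) = post^*(\mathcal{P'}, \lbrace c_\$ \rbrace) \cap (P \times \Gamma^* \times \Gamma^*)$, build $\mathcal{G}$ for $\mathcal{P'}$, and note that for a genuine configuration of $\mathcal{P}$, that is one with $p \in P$ and $w_u, w_l \in \Gamma^*$, the intersection is automatically met, so membership in $post^*(\mathcal{P}, \mathcal{C})$ reduces to membership in $post^*(\mathcal{P'}, \lbrace c_\$ \rbrace)$, which by the previous equivalence is the same as $\top w_u p w_l \bot \in \mathcal{L}(\mathcal{G})$.

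The delicate part is the converse direction underlying Lemma~\ref{lm_post_dec_2}, namely that $\mathcal{G}$ generates no spurious words. The key is the \emph{locking} behaviour noted after the construction: between simulations the sentential form carries exactly one state marker from $\bar{P}$, and while a rule is being replayed it carries only the transient markers ($\delta$, $\delta_0$, $\delta_1$) of that single rule, so separate simulations cannot interleave and, once $(r_0^\delta)$ fires, the only continuations are $(r_1^\delta), \dots, (r_f^\delta)$ in the order they were defined. I would also verify that the context conditions faithfully encode the semantics, in particular that $(r_1^{\delta,x})$ versus $(r_1^{\delta,\top})$ correctly distinguishes a nonempty upper stack, whose bottom letter is consumed, from an empty one, which is left untouched, and that the passage to $R_{final}$ through $(r_p^{final})$ is irreversible, so that a terminal word can only emerge after a complete and faithful chain of step simulations.

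Establishing this confinement rigorously is the main obstacle; the remaining arguments are routine bookkeeping, carried out by induction on the length of runs in one direction and on the length of derivations in the other. Once the single-configuration equivalence and the context-sensitivity check are secured, the general statement follows immediately from the intersection identity of Theorem~\ref{theo_upds_post_simple}.
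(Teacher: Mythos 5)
Your proposal is correct and follows essentially the same route as the paper: reduce to the single starting configuration $c_\$$ via Theorem~\ref{theo_upds_post_simple}, check that the production rules are non-contracting and hence context-sensitive, and combine Lemmas~\ref{lm_post_dec_1} and~\ref{lm_post_dec_2}, whose soundness rests on the same \emph{locking} property of the transient markers $\delta, \delta_0, \delta_1$ that the paper notes after its construction. The only difference is one of emphasis: the paper leaves the confinement argument behind Lemma~\ref{lm_post_dec_2} implicit (``by design of the grammar''), whereas you correctly identify it as the step that would need a rigorous induction on derivations.
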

	
	Since the membership problem is decidable for context-sensitive grammars, the following theorem holds:
	
	\begin{theorem}
		\label{theo_post_decidable}
		Given a UPDS $\mathcal{P}$, a regular set of configurations $\mathcal{C}$, and a configuration $c$ of $\mathcal{P}$, we can decide whether $c \in post^* ( \mathcal{P}, \mathcal{C} )$ or not.
	\end{theorem}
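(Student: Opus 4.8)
The plan is to derive this decidability result directly from the grammar characterization of $post^*$ established in Theorem~\ref{theo_post_grammar}, combined with the classical decidability of the membership problem for context-sensitive languages recalled in Theorem~\ref{theo_contsens_dec}. The entire purpose of the grammar construction was to convert a reachability question about the UPDS into a membership question about a context-sensitive language, so the argument here essentially amounts to assembling these two ingredients into a single decision procedure.

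Concretely, I would proceed as follows. Given the configuration $c = \langle p, w_u, w_l \rangle$ whose membership in $post^* ( \mathcal{P}, \mathcal{C} )$ we wish to test, first apply Theorem~\ref{theo_post_grammar} to effectively compute a context-sensitive grammar $\mathcal{G}$ over the terminal alphabet $\Sigma = \lbrace \top, \bot \rbrace \cup P \cup \Gamma$ satisfying the equivalence $\langle p, w_u, w_l \rangle \in post^* ( \mathcal{P}, \mathcal{C} )$ if and only if $\top w_u p w_l \bot \in \mathcal{L} ( \mathcal{G} )$. Then encode the target configuration as the single terminal word $w = \top w_u p w_l \bot$, which is well formed since $w_u, w_l \in \Gamma^*$ and $p \in P$ are all terminal symbols of $\mathcal{G}$. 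By Theorem~\ref{theo_contsens_dec} we can effectively decide whether $w \in \mathcal{L} ( \mathcal{G} )$, and by the equivalence of Theorem~\ref{theo_post_grammar} this answer coincides with whether $c \in post^* ( \mathcal{P}, \mathcal{C} )$. Composing the two effective steps yields the desired algorithm.

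Since both preceding theorems are effective, I do not expect any genuine obstacle at this stage: the construction of $\mathcal{G}$ from $\mathcal{P}$ and $\mathcal{C}$ is computable, and the membership test for context-sensitive grammars is a decidable problem. The only points that require care are bookkeeping. First, one should confirm that the encoding $c \mapsto \top w_u p w_l \bot$ is injective on configurations, so that the membership test returns an answer about exactly the configuration $c$ and no spurious collisions occur. Second, the reduction of Theorem~\ref{theo_upds_post_simple} to a single starting configuration has already been folded into the grammar construction, so the regularity of $\mathcal{C}$ is absorbed upstream and requires no additional treatment here; the remaining work is purely the invocation of the two cited results in sequence.
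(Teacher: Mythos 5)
Your proof is correct and is exactly the paper's argument: the paper derives Theorem~\ref{theo_post_decidable} immediately by composing the grammar characterization of Theorem~\ref{theo_post_grammar} with the decidability of membership for context-sensitive languages (Theorem~\ref{theo_contsens_dec}). Your write-up just makes explicit the encoding $c \mapsto \top w_u p w_l \bot$ and the effectiveness of each step, which the paper leaves implicit.
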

	
	Unfortunately, this method cannot be extended to $pre^*$ due to a property of context-sensitive grammars: each time a context-sensitive rule is applied to a nonterminal word to produce a new word, the latter is of greater or equal length than the former. The forward reachability relation does comply with this monotony condition, as the combined size of the upper and lower stacks can only increase or stay the same during a computation, but the backward reachability relation does not.

	\section{Under-approximating $pre^*$}
	
	\emph{Under-approximations} of reachability sets can be used to discover errors in programs: if $\mathcal{X}$ is a regular set of forbidden configurations of a UPDS $\mathcal{P}$, $\mathcal{C}$ a regular set of starting configurations, and $U \subseteq pre^* ( \mathcal{X} )$ a regular under-approximation, then $U \cap \mathcal{C} \neq \emptyset$ implies that a forbidden configuration can be reached from the starting set, as shown in Figure \ref{fig:under_UPDS}. The emptiness of the above intersection has to be decidable, hence, the need for a regular approximation.
	
	\begin{figure}[!htb]
		\centering
		\begin{tikzpicture}
		\begin{scope}[blend group=soft light]
		\fill[red!30!white] (-3,0) circle (2);
		\fill[green!30!white] (-2.4,-0.1) circle (1.2);
		\fill[blue!30!white] (-0.3,0) circle (1.7);
		\end{scope}
		
		\node at (-3,1.5) {$pre^* (\mathcal{X})$};
		\node at (-2.6,0.4) {$\mathcal{U}$};
		\node at (0,1) {$\mathcal{C}$};
		\end{tikzpicture}
		\captionof{figure}{Using an under-approximation.}
		\label{fig:under_UPDS}
	\end{figure}
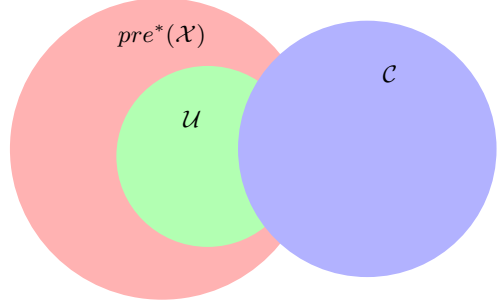

	\subsection{Multi-stack pushdown systems}
	
	Multi-stack pushdown systems (MPDSs) are pushdown systems with multiple stacks.
	
	\begin{definition}[La Torre et al. \cite{TMP-lics07}]
		A \emph{Multi-stack pushdown system} (or MPDS) is a quadruplet $\mathcal{M} = ( P, \Gamma, l, \Delta )$ where $P$ is a finite set of control states, $\Gamma$ is a finite stack alphabet, $l$ is the number of stacks, and $\Delta \subset P \times \Gamma \times \lbrace 1, \dots, l \rbrace \times P \times \Gamma^*$ a finite set of transition rules.
	\end{definition}
	
	For a given transition of a MPDS, in a given control state, only one stack is read and modified. A rule of the form $( p, w, n ) \rightarrow ( p', w' )$ is applied to the $n$-th stack with semantics similar to those of common pushdown systems.
	
	A \emph{configuration} of $\mathcal{M}$ is an element of $P \times ( \Gamma^* )^l$. A set of configurations $\mathcal{C}$ is said to be \emph{regular} if for all $p \in P$, there exists a finite-state automaton $\mathcal{A}_p$ on the alphabet $\lbrace \# \rbrace \cup \Gamma$ such that $\mathcal{L} ( \mathcal{A}_p ) = \lbrace w_1 \# \dots \# w_l \mid \left< p, w_1, \dots, w_l \right> \in \mathcal{C} \rbrace$.
	
	We define a successor relation $\hookrightarrow_\mathcal{M}$ on configurations. If $\delta = ( p, a, i ) \rightarrow ( p', w ) \in \Delta$, then for each configuration $c = \left< p, w_1, \dots, w_l \right>$ such that $w_i = ax$, we have $c \overset{\delta}{\hookrightarrow} \left< p', w'_1, \dots, w'_l \right>$ where $w'_i = wx$ and $w'_j = w_j$ if $j \neq i$. $\hookrightarrow^*_\mathcal{M}$ is the reflexive and transitive closure of the relation $\hookrightarrow_\mathcal{M} = ( \underset{\delta \in \Delta}{\bigcup} \overset{\delta}{\hookrightarrow} )$. We may ignore the variable $\mathcal{M}$ if only a single MPDS is being considered.
	
	For a given set of configurations $\mathcal{C}$ of a MPDS $\mathcal{M}$, we define its set of \emph{predecessors} $pre_{\text{{MPDS}}}^* ( \mathcal{M},\mathcal{C} ) = \lbrace c \in P \times ( \Gamma^* )^l \mid \exists c' \in \mathcal{C}, c \hookrightarrow^* c' \rbrace$.
	
	A \emph{run} $r$ of $\mathcal{M}$ from a configuration $c_0$ is a sequence of configurations $r = ( c_i )_{i = 1, \dots, n} \in \Delta^*$ such that $c_0 \overset{t_1}{\hookrightarrow} c_1 \overset{t_2}{\hookrightarrow} c_2 \dots \overset{t_n}{\hookrightarrow} c_n$, where $t = (t_i)_{i = 1, \dots, n}$ is a sequence of transition rules of $\mathcal{M}$ called the \emph{trace} of $r$. We then write $c_0 \overset{t}{\hookrightarrow} c_n$.
	
	Multi-stack automata are unfortunately Turing powerful, even with only two stacks. La Torre et al. thus introduced in \cite{TMP-lics07} a new restriction called \emph{phase-bounding}:
	
	\begin{definition}
		A configuration $c'$ of $\mathcal{M}$ is said to be reachable from another configuration $c$ in \emph{k phases} if there exists a sequence of runs $r_1, r_2, \dots r_{k}$ with matching traces $t_1, t_2, \dots t_{k}$ such that $c_0 \overset{t_1}{\hookrightarrow} c_1 \dots \overset{t_k}{\hookrightarrow} c_k$ where $c_0 = c$, $c_k = c'$, $(c_i)_{i = 1, \dots, k}$ is a sequence of configurations on $\mathcal{M}$ and where during the execution of a given run $r_i$, at most a single stack is popped from. We note $c \hookrightarrow^*_{\mathcal{M},k} c'$.
	\end{definition}
	
	We then define $pre_{\text{\tiny{MPDS}}}^* ( \mathcal{M},\mathcal{C}, k ) = \lbrace c \in P \times ( \Gamma^* )^l \mid \exists c' \in \mathcal{C}, c \hookrightarrow^*_{\mathcal{M},k} c' \rbrace$. The following theorem has been proven in \cite{seth-cav10}:
	
	\begin{theorem}
		\label{theo_mpds_pre_k_reg}
		Given a MPDS $\mathcal{M}$ and a regular set of configurations $\mathcal{C}$, the set $pre_{\text{\tiny{MPDS}}}^* ( \mathcal{M},\mathcal{C}, k )$ is regular and effectively computable.
	\end{theorem}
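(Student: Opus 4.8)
The plan is to prove the statement by induction on the number of phases $k$, reducing the whole computation to a single-phase backward operation that is shown to preserve regularity. For a stack index $i \in \lbrace 1, \dots, l \rbrace$, let $pre^1_i ( \mathcal{M}, \mathcal{D} )$ denote the set of configurations from which some configuration of a set $\mathcal{D}$ is reachable by a single phase during which only stack $i$ may be popped. Since a run of at most $k$ phases splits into a first phase followed by a run of at most $k-1$ phases, one has $pre^*_{\text{\tiny{MPDS}}} ( \mathcal{M}, \mathcal{C}, k ) = \mathcal{C} \cup \bigcup_{i = 1}^{l} pre^1_i ( \mathcal{M}, pre^*_{\text{\tiny{MPDS}}} ( \mathcal{M}, \mathcal{C}, k - 1 ) )$, where empty phases are permitted so that the union also covers runs using fewer than $k$ phases. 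The base case $k = 0$ reduces to $\mathcal{C}$, which is regular by hypothesis. Thus the whole theorem follows once I show that $pre^1_i$ maps (effectively given) regular sets to effectively computable regular sets, since a finite union and a $k$-fold composition of regularity-preserving operations stays regular.

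First I would isolate the structure of a single phase with popped stack $i$. During such a phase, stack $i$ is the only stack that may be emptied, so it behaves exactly like the stack of an ordinary pushdown system, whereas every other stack $j \neq i$ is only ever pushed to or has its top rewritten, never popped. Consequently, writing the start content of stack $j$ as $w_j = a_j x_j$ with top symbol $a_j$, the suffix $x_j$ lying strictly below the initial top is never read and is preserved: the end content has the form $w'_j = v_j x_j$ with $v_j \in \Gamma^+$, where $v_j$ is derived from $a_j$ by pop-free (push and switch) moves on stack $j$. Meanwhile $w'_i$ is obtained from $w_i$ by a genuine pushdown computation on stack $i$. Going backward, the task is therefore to recover, for each $j \neq i$, the suffix $x_j$ together with the original top $a_j$ such that $a_j$ can produce $v_j$, all coupled through one and the same run.

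The core step is then to compute $pre^1_i ( \mathcal{M}, \mathcal{D} )$ for a regular $\mathcal{D}$ recognised by the product automata of the paper's regularity definition. I would generalise the classical saturation algorithm for pushdown $pre^*$ of Bouajjani, Esparza and Maler \cite{BEM-concur97}: the segment of the automaton reading $w_i$ is saturated exactly as in the single-stack case to capture the full pushdown behaviour on stack $i$, while for each auxiliary stack $j \neq i$ the saturation records the backward effect of the pop-free computation that turns $a_j$ into $v_j$, thereby rewriting the top region of the $w_j$-segment while leaving its preserved suffix $x_j$ untouched. Because pushes onto the auxiliary stacks and moves on stack $i$ are interleaved in an order dictated by the run, the construction must thread the current control state through the saturation so that each auxiliary rewriting is synchronised with the pushdown computation driving it, the whole fixpoint being taken over the finite set of transitions of the product automaton.

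The main obstacle I anticipate is precisely this synchronisation: unlike the single-stack case, the backward pushdown computation on stack $i$ and the backward pop-free computations on the remaining stacks cannot be decoupled, since they are produced by one run and must agree on the intermediate control states at which control passes from manipulating one stack to another. The key observation making the construction finite is that this interleaving is governed entirely by the finite control of $\mathcal{M}$ together with the pushdown behaviour on stack $i$, so only a bounded amount of summary information needs to be remembered at each step. Effectiveness then follows because each single-phase saturation is a terminating fixpoint computation over a finite transition set, the outer recursion on $k$ performs finitely many such computations, and finite unions of regular sets are regular and computable; this yields a finite automaton for $pre^*_{\text{\tiny{MPDS}}} ( \mathcal{M}, \mathcal{C}, k )$ as required.
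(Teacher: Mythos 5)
The first thing to note is that the paper does not prove this theorem at all: it is imported verbatim from Seth \cite{seth-cav10}, and the paper's entire treatment is the citation. So there is no in-paper proof to compare your attempt against; the relevant benchmark is Seth's construction. Your outer skeleton is sound and consistent with how such results are established: the phase decomposition $pre_{\text{\tiny{MPDS}}}^* ( \mathcal{M},\mathcal{C}, k ) = \bigcup_{i} pre^1_i ( \mathcal{M}, pre_{\text{\tiny{MPDS}}}^* ( \mathcal{M},\mathcal{C}, k-1 ) )$ is correct (empty phases make the extra $\mathcal{C}$ term redundant but harmless), and your structural analysis of a single phase is also correct: the popped stack behaves as a genuine pushdown, while every other stack only has its top region rewritten by pop-free moves, so everything strictly below its initial top is preserved.

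The genuine gap is that the one step carrying all the difficulty --- that $pre^1_i$ maps effectively regular sets to effectively regular sets --- is asserted rather than proved. You correctly name the synchronisation obstacle (the pushdown computation on stack $i$ and the pop-free computations on the stacks $j \neq i$ are produced by one interleaved run) and then dismiss it with ``only a bounded amount of summary information needs to be remembered'', but that claim \emph{is} the theorem; your text never says what the summary is or why it suffices. To close the gap you would need, concretely: (i) the identification of the summary --- for each non-popped stack $j$, its current top symbol together with a single state $\sigma_j$ of the target automaton $\mathcal{A}_q$, maintained under the invariant that the symbols frozen on stack $j$ so far, followed by the preserved suffix $x_j$, are readable in $\mathcal{A}_q$ from $\sigma_j$ to the end of segment $j$; this works only because symbols pushed onto stack $j$ are read by $\mathcal{A}_q$ in reverse push order, a directionality point your sketch never confronts; (ii) an up-front nondeterministic guess of the final control state $q$ and of the $\mathcal{A}_q$-states at the $\#$-boundaries, since the paper's notion of regularity couples all $l$ stacks in one automaton, so the segments cannot be saturated independently; (iii) the product PDS over the popped stack whose control states are $P$ enriched with these summaries, to which the classical saturation of \cite{BEM-concur97} is then applied; and (iv) a two-way correctness argument relating runs of this product PDS to phase runs of $\mathcal{M}$, together with the check that the resulting automaton again has the shape required to re-enter your induction on $k$. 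Items (i)--(iv) are the actual substance of \cite{seth-cav10}; as written, your text is a plausible plan for a proof, not a proof.
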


	\subsection{Application to UPDSs}
	
	The notion of bounded-phase computations can be extended to UPDSs. A run $r$ of $\mathcal{P}$ is said to be \emph{k-phased} if it is of the form: $r = r_1 \cdot r_2 \dots r_{k}$ where $\forall i \in \left\{ 1, \dots, k \right\}$, $r_i \in (\Delta_{Push} \cup \Delta_{Switch} )^* \cup ( \Delta_{Pop} \cup \Delta_{Switch} )^*$. During a phase, one can either push or pop, but can't do both. Such a run has therefore at most $k$ alternations between push and pop rules. We can extend this notion to traces.
	
	The \emph{k-bounded} reachability relation $\Rightarrow_k$ is defined as follows: $c_0 \Rightarrow_k c_1$ if there exists a $k$-phased run $r$ on $\mathcal{P}$ with a matching trace $t$ such that $c_0 \overset{t}{\Rightarrow} c_1$. Using this new reachability relation, given a set of configurations $\mathcal{C}$, we can define $pre^* ( \mathcal{P}, \mathcal{C}, k )$.
	
	We can show that a UPDS $\mathcal{P}$ can be simulated by a MPDS $\mathcal{M}$ with two stacks, the second stack of $\mathcal{M}$ being equivalent to the lower stack, and the first one, to a mirrored upper stack followed by a symbol $\bot$ that can't be popped and is used to know when the end of the stack has been reached. Elements of $P \times \Gamma^* \times \Gamma^*$ can equally be considered as configurations of $\mathcal{P}$ or $\mathcal{M}$, assuming in the latter case that we consider the mirror of the first stack and add a $\bot$ symbol to its bottom. Thus:
	
	\begin{lemma}
		For a given UPDS $\mathcal{P} = ( P, \Gamma, \Delta )$ and a regular set of configurations $\mathcal{C}$, there exists a MPDS $\mathcal{M}$, a regular set of configurations $\mathcal{C'}$, and $\bot \notin \Gamma$ such that $\langle p, w_u^R \bot, w_l \rangle \in pre_{\text{\tiny{MPDS}}}^* ( \mathcal{M},\mathcal{C'}, k ) \cap (P \times \Gamma^* \times \Gamma^* )$ if and only if $\langle p, w_u, w_l \rangle \in pre^* ( \mathcal{P}, \mathcal{C}, k )$.
	\end{lemma}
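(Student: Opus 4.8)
The plan is to realise $\mathcal{P}$ as a two--stack MPDS $\mathcal{M}$ whose second stack mirrors the lower stack of $\mathcal{P}$ and whose first stack holds the reversed upper stack followed by a bottom marker $\bot \notin \Gamma$, so that a UPDS configuration $\langle p, w_u, w_l \rangle$ is encoded by the MPDS configuration $\langle p, w_u^R \bot, w_l \rangle$. I would set $\mathcal{C'} = \lbrace \langle p, w_u^R \bot, w_l \rangle \mid \langle p, w_u, w_l \rangle \in \mathcal{C} \rbrace$; since reversing one stack component, relabelling $\bar{\Gamma}$ back to $\Gamma$, and inserting the fixed marker $\bot$ all preserve regularity, $\mathcal{C'}$ is regular. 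Each rule of $\Delta$ is then translated into a short block of MPDS rules through fresh intermediate control states lying in $P' \setminus P$: a switch rule $(p,\gamma)\to(p',\gamma')$ becomes a single rule rewriting the top of the second stack; a pop rule $(p,\gamma)\to(p',\varepsilon)$ becomes two rules that first pop $\gamma$ off the second stack and then push $\gamma$ onto the first (so the popped letter joins the upper stack); and a push rule $(p,\gamma)\to(p',\gamma'\gamma'')$ becomes two rules that first pop the top letter of the first stack (the former bottom of the upper stack) and then rewrite the top of the second stack.

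To connect the two models I would first establish a single--step correspondence: for every $\delta$, $\langle p, w_u, w_l \rangle \overset{\delta}{\Rightarrow}_\mathcal{P} \langle p', w'_u, w'_l \rangle$ holds if and only if $\langle p, w_u^R \bot, w_l \rangle$ reaches $\langle p', (w'_u)^R \bot, w'_l \rangle$ in $\mathcal{M}$ by the block associated with $\delta$, passing only through the fresh intermediate states in between. The key structural point is that an intermediate state is entered only at the start of such a block and offers no choice but to complete it, so that any $\mathcal{M}$--run starting and ending in a state of $P$ decomposes uniquely into complete blocks; reading off the associated rules yields a matching $\mathcal{P}$--run, and conversely. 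Induction on the run length lifts this to $\langle p, w_u, w_l \rangle \Rightarrow_\mathcal{P} c'$ iff $\langle p, w_u^R \bot, w_l \rangle \hookrightarrow^*_\mathcal{M} c''$ for the encodings of matching configurations, and intersecting with $P \times \Gamma^* \times \Gamma^*$ discards the intermediate states; together with the definition of $\mathcal{C'}$ this reduces the lemma to a statement about phases.

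It then remains to show that the number of phases is preserved. Here I would use that, within a complete simulation, a pop rule pops the second stack (and only pushes the first), a push rule pops the first stack (and only pushes the second), and a switch rule pops neither. Hence a UPDS phase made of push and switch rules maps to an $\mathcal{M}$--phase that pops only the first stack, while a phase of pop and switch rules maps to one popping only the second; a $k$--phased $\mathcal{P}$--run therefore yields a $k$--phased $\mathcal{M}$--run, giving one inclusion. For the converse I would decompose a $k$--phased $\mathcal{M}$--run into its blocks and observe that, since no single MPDS phase may pop both stacks, any block popping the first stack and any block popping the second must be separated by a phase boundary; reassembling the corresponding $\mathcal{P}$--steps then yields a run with at most $k$ phases. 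An appeal to Theorem \ref{theo_mpds_pre_k_reg} afterwards gives regularity and computability.

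The main obstacle is the push rule applied when the upper stack is empty: the UPDS rule then leaves the upper stack untouched, so the naive simulation merely reads $\bot$ on the first stack without popping it. Such a block pops neither stack of $\mathcal{M}$, which would let $\mathcal{M}$ fold an isolated push phase into an adjacent pop phase and thus reach a configuration in strictly fewer phases than $\mathcal{P}$ requires, breaking the exact correspondence. I would resolve this by making the simulation of an empty--upper--stack push perform a genuine pop on the first stack, popping the marker $\bot$ and immediately restoring it above a permanent, never--popped bottom guard, so that every push, empty or not, is forced into a first--stack phase exactly as in $\mathcal{P}$. Checking that this gadget preserves the configuration encoding while still enforcing the phase boundary, and that it introduces no spurious $\mathcal{M}$--runs, is the delicate part of the argument; everything else is routine.
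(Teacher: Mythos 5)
Your construction is essentially the paper's: the same two-stack encoding $\langle p, w_u, w_l\rangle \mapsto \langle p, w_u^R\bot, w_l\rangle$, the same regular $\mathcal{C'}$, the same rule-by-rule simulation through fresh intermediate control states, and the same final appeal to Theorem \ref{theo_mpds_pre_k_reg}. Where you genuinely depart from the paper is the empty-upper-stack push, and your instinct there is sound. The paper handles that case with the rule $(\delta, \bot, 1) \rightarrow (q, \bot)$, which reads $\bot$ without popping it, and its proof is a one-line assertion that ``$\mathcal{M}$ simulates $\mathcal{P}$'' with no phase-counting argument; under the only reading of ``popped from'' that makes the paper's pop blocks legal (a rule pops stack $i$ exactly when it rewrites the top of stack $i$ to $\varepsilon$), such a block pops neither stack. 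Consequently $\mathcal{M}$ can fold pushes performed on an empty upper stack into a phase that pops the second stack, exactly as you say: with $\delta_1 = (p,a) \rightarrow (p', cd)$ and $\delta_2 = (p', c) \rightarrow (p'', \varepsilon)$, the UPDS needs two phases to go from $\langle p, \varepsilon, ab\rangle$ to $\langle p'', c, db\rangle$, whereas the simulating MPDS run pops only the second stack and thus fits in one phase, so the right-to-left direction of the stated equivalence fails for the paper's $\mathcal{M}$. Your pop-and-restore gadget (pop $\bot$, reinstate it above a never-popped guard) repairs precisely this, making every push block pop the first stack and yielding the clean block-to-phase correspondence you describe; the only price is that the first-stack encoding acquires a guard symbol below $\bot$, so the lemma's statement, $\mathcal{C'}$, and the final intersection must be adjusted to that encoding. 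It is worth noting what each version buys: the paper's simpler construction still serves the intended application, because every bounded-phase run of $\mathcal{M}$ projects to \emph{some} run of $\mathcal{P}$, so the computed set is a regular under-approximation of $pre^*(\mathcal{P}, \mathcal{C})$ containing $pre^*(\mathcal{P}, \mathcal{C}, k)$; but only your repaired construction gives the exact set $pre^*(\mathcal{P}, \mathcal{C}, k)$ claimed by the lemma and needed for Theorem \ref{theo_upds_pre_k_reg} as literally stated.
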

	
	\begin{proof}				
		Following the above intuition, we define a two-stack pushdown system $\mathcal{M} = ( P \cup \Delta_{push} \cup \Delta_{pop}, \Gamma \cup \lbrace \bot \rbrace, 2, \Delta' )$ where $\Delta'$ has the following rules:
		\begin{description}		
			\item[Switch rules:] if $\delta = ( p, a ) \rightarrow ( q, b ) \in \Delta_{switch}$, then $( p, a, 2 ) \rightarrow ( q, b ) \in \Delta'$.
			
			\item[Pop rules:] if $\delta = ( p, a ) \rightarrow ( q, \varepsilon ) \in \Delta_{pop}$, then $( p, a, 2 ) \rightarrow ( \delta, \varepsilon ) \in \Delta'$ and $( \delta, x, 1 ) \rightarrow ( q, ax ) \in \Delta'$ for each $x \in \Gamma \cup \lbrace \bot \rbrace$.
			
			\item[Push rules:] if $\delta = ( p, a ) \rightarrow ( q, bc ) \in \Delta_{push}$, then we define $( p, a, 2 ) \rightarrow ( \delta, bc ) \in \Delta'$, $( \delta, \bot, 1 ) \rightarrow ( q, \bot ) \in \Delta'$ and $( \delta, x, 1 ) \rightarrow ( q, \varepsilon ) \in \Delta'$ for each $x \in \Gamma$. If we reach $\bot$ on the second stack, the upper stack is considered to be empty and no symbol should be popped from it.
		\end{description}
		We then define $\mathcal{C'} = \lbrace \langle p, w_u^R \bot, w_l \rangle \mid \langle p, w_u, w_l \rangle \in \mathcal{C} \rbrace$. $\mathcal{M}$ simulates $\mathcal{P}$ and $\mathcal{C'}$ is equivalent to (in bijection with) $\mathcal{C}$.
	\end{proof}
	
	From Theorem \ref{theo_mpds_pre_k_reg}, we get:
	
	\begin{theorem}
		\label{theo_upds_pre_k_reg}
		Given a UPDS $\mathcal{P}$ and a regular set of configurations $\mathcal{C}$, the set $pre^* ( \mathcal{P}, \mathcal{C}, k )$ is regular and effectively computable.
	\end{theorem}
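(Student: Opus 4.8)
The plan is to transport the regularity result for multi-stack systems (Theorem~\ref{theo_mpds_pre_k_reg}) back to UPDSs through the simulation established in the preceding lemma. First I would invoke that lemma to obtain the two-stack MPDS $\mathcal{M}$, the fresh symbol $\bot \notin \Gamma$, and the regular set $\mathcal{C'}$ for which $\langle p, w_u^R \bot, w_l \rangle$ is a $k$-phase predecessor of $\mathcal{C'}$ in $\mathcal{M}$ exactly when $\langle p, w_u, w_l \rangle$ is a $k$-phase predecessor of $\mathcal{C}$ in $\mathcal{P}$. By Theorem~\ref{theo_mpds_pre_k_reg}, the set $pre^*_{\text{MPDS}}(\mathcal{M}, \mathcal{C'}, k)$ is regular and effectively computable: for every control state $q$ of $\mathcal{M}$ we can build a finite automaton $\mathcal{B}_q$ over $\lbrace \# \rbrace \cup \Gamma \cup \lbrace \bot \rbrace$ recognizing $\lbrace w_1 \# w_2 \mid \langle q, w_1, w_2 \rangle \in pre^*_{\text{MPDS}}(\mathcal{M}, \mathcal{C'}, k) \rbrace$.

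It then remains to turn, for each $p \in P$, the automaton $\mathcal{B}_p$ into an automaton $\mathcal{A}_p$ over $\bar{\Gamma} \cup \Gamma$ recognizing $\lbrace \bar{w_u} w_l \mid \langle p, w_u, w_l \rangle \in pre^*(\mathcal{P}, \mathcal{C}, k) \rbrace$, the encoding required by the definition of a regular set of UPDS configurations. By the correspondence above, this language equals $\lbrace \bar{w_u} w_l \mid w_u^R \bot \# w_l \in \mathcal{L}(\mathcal{B}_p),\ w_u, w_l \in \Gamma^* \rbrace$; restricting to states $p \in P$ and intersecting $\mathcal{L}(\mathcal{B}_p)$ with the regular language $\Gamma^* \bot \# \Gamma^*$ discards the auxiliary states and malformed stacks introduced by the simulation. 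The two encodings agree on the lower-stack factor $w_l$, but differ on the upper stack: $\mathcal{M}$ stores it mirrored and flagged as $w_u^R \bot$, whereas the UPDS encoding expects $\bar{w_u}$. Bridging the two therefore amounts to deleting the markers $\bot$ and $\#$, renaming the upper-stack letters through the bijection $\Gamma \to \bar{\Gamma}$, and, crucially, reversing the upper-stack factor while leaving the lower-stack factor untouched.

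This selective reversal is the only delicate point, and the main obstacle: it cannot be realized by a letter-to-letter homomorphism nor by a one-way transducer, so I would not treat the transformation as a rational transduction but would build $\mathcal{A}_p$ directly from $\mathcal{B}_p$ with two copies of its state set. On the first copy, $\mathcal{A}_p$ reads $\bar{w_u}$ while simulating the run of $\mathcal{B}_p$ on $w_u^R$ backwards, i.e.\ every edge $q \xrightarrow{a} q'$ of $\mathcal{B}_p$ with $a \in \Gamma$ is replaced by an edge $q' \xrightarrow{\bar{a}} q$, so that the forward reading of $\bar{w_u}$ becomes the backward traversal of the $w_u^R$-segment. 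An $\varepsilon$-transition then simulates the $\bot \#$ junction of $\mathcal{B}_p$, after which the second copy reads $w_l$ forward along the original edges. Since the state reached just before $\bot$ is where the backward traversal must begin while the state reached after $\#$ is where the forward reading of $w_l$ must begin, I would let the initial states of $\mathcal{A}_p$ guess and carry this pair of junction states through the first copy, releasing it at the $\varepsilon$-step. Every step above is an effective automaton construction, so each $\mathcal{A}_p$ is computable and each $\mathcal{L}(\mathcal{A}_p)$ is regular; hence $pre^*(\mathcal{P}, \mathcal{C}, k)$ is regular and effectively computable, which is exactly the statement.
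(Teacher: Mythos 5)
Your proposal is correct and follows the paper's route exactly: simulate the UPDS by the two-stack MPDS of the preceding lemma, invoke Theorem~\ref{theo_mpds_pre_k_reg} on $\mathcal{C'}$, and transfer the regular set back to UPDS configurations. The paper in fact stops at that point, leaving implicit the encoding conversion (restricting to $\Gamma^* \bot \# \Gamma^*$, erasing the markers, relabelling into $\bar{\Gamma}$, and reversing only the upper-stack factor) that you carry out explicitly; your reversed-edge construction with guessed junction states is a sound and standard way to fill in that step.
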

	
	$pre^* ( \mathcal{P}, \mathcal{C}, k )$ is then obviously an under-approximation of $pre^* ( \mathcal{P}, \mathcal{C} )$.

	\section{Over-approximating $post^*$}
	
	While under-approximations of reachability sets can be used to show that an error can occur, \emph{over-approximations} can, on the other hand, prove that a program is safe from a particular error. If $\mathcal{X}$ is a regular set of forbidden configurations on a UPDS $\mathcal{P}$, $\mathcal{C}$ a regular set of starting configurations, and $O \supseteq post^* ( \mathcal{C} )$ a regular over-approximation, then $O \cap \mathcal{X} = \emptyset$ implies that no forbidden configuration can be reached from the starting set and that the program is therefore safe, as shown in Figure \ref{fig:over_UPDS}.
	
	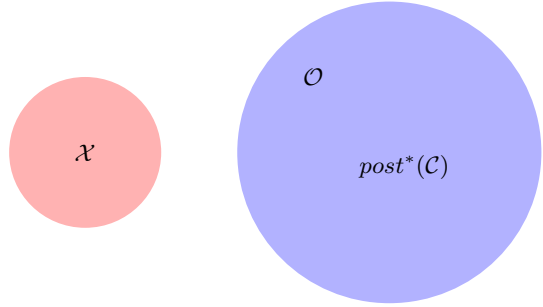
\begin{figure}[!htb]
		\centering
		\begin{tikzpicture}
		\begin{scope}[blend group=soft light]
		\fill[red!30!white] (-3,0) circle (1);
		\fill[green!30!white] (1.2,-0.2) circle (1.2);
		\fill[blue!30!white] (1,0) circle (2);
		\end{scope}
		
		\node at (-3,0) {$\mathcal{X}$};
		\node at (1.2,-0.2) {$post^* (\mathcal{C})$};
		\node at (0,1) {$\mathcal{O}$};
		\end{tikzpicture}
		\captionof{figure}{Using an over-approximation.}
		\label{fig:over_UPDS}
	\end{figure}
	
	The emptiness of the above intersection has to be decidable, hence, the need for a regular approximation.

	\subsection{A relationship between runs and the upper stack}
	
	We prove here that from a regular set of traces of a given UPDS, a regular set of corresponding upper stacks can be computed. A subclass of programs whose UPDS model has a regular set of traces would be programs with finite recursion (hence, with a stack of finite height).
	
	Given a UPDS $\mathcal{P} = ( P, \Gamma, \Delta )$ and a configuration $c = \langle p, w_u, w_l \rangle$ of $\mathcal{P}$, we match inductively to each sequence of transition $\tau \in \Delta^*$ an upper stack word $\upsilon ( \tau, c ) \in \Gamma^*$ according to the following rules:
	\begin{itemize}
		\item $\upsilon ( \varepsilon, c ) = w_u$;
		\item if $\delta =  ( p, \gamma ) \rightarrow ( p', \gamma' ) \in \Delta_{switch}$, then $\upsilon ( \tau \delta, c ) = \upsilon ( \tau, c )$;
		\item if $\delta =  ( p, \gamma ) \rightarrow ( p', \varepsilon ) \in \Delta_{pop}$, then $\upsilon ( \tau \delta, c ) = \upsilon ( \tau ) \cdot a$;
		\item if $\delta =  ( p, \gamma ) \rightarrow ( p', \gamma' \gamma'' ) \in \Delta_{push}$, then $\upsilon ( \tau \delta, c ) = \varepsilon$ if $\upsilon ( \tau, c ) = \varepsilon$, and $\upsilon ( \tau \delta, c ) = w$ if $\upsilon ( \tau ) = w x$, where $x \in \Gamma$;
	\end{itemize}
	
	Intuitively, $\upsilon ( \tau, c )$ is the upper stack content after applying the sequen\-ce of transitions $\tau$, starting from a configuration $c$. Note that $\tau$ may not be an actual trace of $\mathcal{P}$; $\upsilon ( \tau )$ is merely the \emph{virtual} upper stack built by pushing and popping values in a write-only manner, regardless of the lower stack, the control states, and the coherence of the sequence of transitions used. However, if $t$ is indeed a trace of $\mathcal{P}$, then the upper stack configuration $\upsilon ( t, c )$ is indeed reachable from $c$ using the trace $t$.
	
	A sequence of transitions is said to be \emph{meaningful} if $\forall p' \in P$, any transition ending in state $p'$ can only be followed by a transition starting in state $p'$. A trace of $\mathcal{P}$ is obviously a meaningful sequence. A set of sequences of transitions $T$ is said to be \emph{prefix-closed} if, given $t \in T$, any prefix of $t$ is in $T$ as well. The set of all traces of a given system is obviously prefix-closed.
	
	The following theorem holds:
	
	\begin{theorem}
		\label{theo_traces_reg_post}
		For a UPDS $\mathcal{P} = ( P, \Gamma, \Delta )$, a regular set of configurations $\mathcal{C}$, and a regular, prefix-closed set of meaningful sequences of transitions $T \subseteq \Delta^*$ of $\mathcal{P}$ from $\mathcal{C}$, the set of upper stack configurations $\mathcal{U} ( T ) = \lbrace \langle p', w'_u \rangle \mid \exists c = \langle p, w_u, w_l \rangle \in \mathcal{C}, \exists t \in T, t \text{ starts in state } p \text{ and ends in state } p', \upsilon ( t, c ) = w'_u \rbrace$ spawned by $T$ from $\mathcal{C}$ is regular and effectively computable.
	\end{theorem}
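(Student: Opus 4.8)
The plan is to reduce the computation of $\mathcal{U}(T)$ to a standard $post^*$ computation on an ordinary pushdown system with a single stack, and then to invoke the regularity of $post^*$ for PDSs (\cite{BEM-concur97, EHRS-cav00}). The starting observation is that, in the definition of $\upsilon$, the upper stack evolves as an \emph{autonomous} pushdown stack whose operations are entirely determined by the \emph{type} of each transition read from $\tau$, independently of the lower stack and of the control states: a switch rule leaves it unchanged, a pop rule $(p,\gamma)\to(p',\varepsilon)$ appends $\gamma$ at its right end (its moving end, adjacent to the stack pointer), and a push rule removes one symbol from that same end unless it is empty, in which case it does nothing. Computing $\upsilon(\tau,c)$ while reading a word $\tau \in \Delta^*$ is therefore nothing but running a single stack under a sequence of stack operations dictated by $\tau$.

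I would encode this stack in reverse, so that its moving end becomes the top of an ordinary PDS stack, and place a fresh bottom marker $\bot \notin \Gamma$ underneath it. With this representation a pop rule of $\mathcal{P}$ pushes the popped symbol on top, a switch rule is a no-op, and a push rule of $\mathcal{P}$ becomes an \emph{emptiness-guarded} pop: pop the top symbol if it lies in $\Gamma$, but leave $\bot$ in place if it is already on top. This is the one place where care is needed, and is exactly why the marker is introduced: a PDS cannot test its stack for emptiness directly, but it can read its topmost symbol, so the guarded pop is realized by the rules $(q,x)\to(q',\varepsilon)$ for $x\in\Gamma$ together with $(q,\bot)\to(q',\bot)$, precisely as in the two-stack simulation used earlier for $pre^*$.

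Next I would build a PDS $\mathcal{M}$ whose control states range over $Q_T \times P$, where $Q_T$ is the state set of a finite automaton $\mathcal{A}_T$ recognizing $T$ and the $P$-component tracks the current control state of $\mathcal{P}$. For every edge $q \xrightarrow{\delta} q'$ of $\mathcal{A}_T$ with $\delta = (p_1,\cdot)\to(p_2,\cdot)$, and every admissible top-of-stack symbol, $\mathcal{M}$ carries a rule from $(q,p_1)$ to $(q',p_2)$ performing the stack operation above; the $P$-component both enforces meaningfulness (the tracked state must match the source of $\delta$) and records the end state $p'$ needed for the output. The initial set is $\{\,((i_T,p),\,w_u^R\bot)\mid \langle p,w_u\rangle \in \mathcal{C}_{up}\,\}$, which is regular: $\mathcal{C}_{up}$ is the upper-stack projection of the regular set $\mathcal{C}$ and hence regular, and reversal followed by appending $\bot$ preserves regularity. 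Since $T$ is prefix-closed, every state of $\mathcal{A}_T$ may be declared accepting, so that every run of $\mathcal{M}$ corresponds to reading some $t\in T$.

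Finally, I would argue by a straightforward induction on $|t|$, mirroring the inductive definition of $\upsilon$, that a configuration $((q,p'),\,s)$ with $q$ accepting is reachable in $\mathcal{M}$ from the initial set if and only if there are $c=\langle p,w_u,w_l\rangle\in\mathcal{C}$ and $t\in T$ from $p$ to $p'$ with $s = \upsilon(t,c)^R\bot$. Intersecting $post^*(\mathcal{M},\text{init})$ with the regular set of accepting configurations, stripping the marker, reversing the stack and reading off the $P$-component then yields exactly $\mathcal{U}(T)$; the empty trace contributes $\mathcal{C}_{up}$ and is already covered by the initial configurations. Regularity and effective computability follow because $post^*$ of a PDS from a regular set is regular and computable. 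I expect the main obstacle to be the faithful simulation of the emptiness-guarded pop with the $\bot$ marker together with getting the orientation of the reversed stack right; the meaningfulness and prefix-closure hypotheses enter only to make the start and end states well defined and to justify treating every state of $\mathcal{A}_T$ as accepting.
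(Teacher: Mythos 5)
Your proposal is correct, but it takes a genuinely different route from the paper. The paper does not reduce to an ordinary PDS at all: it works directly on the trace automaton $\mathcal{A}_T$ (whose states it assumes partitioned as $Q = \bigcup_p Q_p$ by meaningfulness, with $F = Q$ by prefix-closure) and runs a bespoke saturation procedure that converts $\mathcal{A}_T$ into an automaton $\mathcal{A}_U$ over $\Gamma$: pop rules of $\mathcal{P}$ become edges labelled by the popped letter, switch rules become $\varepsilon$-edges, and push rules become $\varepsilon$-edges that \emph{jump backwards} over the last letter-labelled edge (or back to an initial state when the virtual upper stack is empty), thereby simulating the removal of a symbol at the moving end without ever reversing the word. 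Correctness is then established by two dedicated lemmas (one per inclusion, the harder one by induction on saturation steps), and the paper also first invokes its Theorem \ref{theo_upds_post_simple} to reduce to a single starting configuration $c_\$$. You instead exploit the same autonomy observation to encode the reversed upper stack, with a $\bot$ marker, as the stack of a product PDS $Q_T \times P$, and then invoke the known regularity and computability of $post^*$ for ordinary pushdown systems; the $\bot$-guarded pop correctly handles the empty-upper-stack case, which is the one genuinely delicate point, and your direct use of $\mathcal{C}_{up}$ (a right quotient of a regular language, hence regular) even avoids the paper's detour through the single-configuration reduction. What each approach buys: yours is shorter and less error-prone because the only thing proved from scratch is a step-by-step simulation invariant, with all saturation machinery imported as a black box; the paper's construction is self-contained, keeps the states of the trace automaton as the states of the upper-stack automaton (no product with $P$, no reversal, no marker), and exposes an explicit path-level correspondence between accepted sequences and upper-stack words, which is what it reuses verbatim when over-approximating $post^*$ from an over-approximated trace set in the following section.
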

	
	Thanks to Theorem \ref{theo_upds_post_simple}, we consider the single configuration case where $\mathcal{C} = \lbrace c_\$ \rbrace$ without loss of generality. Let $\mathcal{A}_T = ( \Delta, Q, E, I, F )$ be a finite state automaton such that $\mathcal{L} ( \mathcal{A}_T ) = T$. Since $T$ is meaningful, we can assume that $Q = \underset{p \in P}{\cup} Q_p$ where $ Q_p$ is such that $\forall q \in Q_p$, if there is an edge $q' \xrightarrow{\delta}_E q$, then the pushdown rule $\delta$ is of the form $( p', a ) \rightarrow ( p, w )$. We can also assume that $F = Q$ since $T$ is prefix-closed.
	
	We introduce the automaton $\mathcal{A}_U = ( \Gamma, Q, E', I, F )$ whose set of transitions $E'$ is defined by applying the following rules until saturation, starting from $E' = \emptyset$:
	\begin{description}
		\item[$( S_{pop} )$] if there is an edge $q_0 \xrightarrow{\delta}_E q_1$ in $\mathcal{A}_T$ and $\delta$ is of the form $( p, a ) \rightarrow ( p', \varepsilon )$, then we add the edge $q_0 \xrightarrow{a} q_1$ to $E'$.
		
		\item[$( S_{switch} )$] if there is an edge $q_0 \xrightarrow{\delta}_E q_1$ in $\mathcal{A}_T$ and $\delta$ is of the form $( p, a ) \rightarrow ( p', b )$, then we add the edge $q_0 \xrightarrow{\varepsilon} q_1$ to $E'$.
		
		\item[$( S_{push} )$] if there is an edge $q_0 \xrightarrow{\delta}_E q_1$ in $\mathcal{A}_T$,  $\delta$ is of the form $( p, a ) \rightarrow ( p', bc )$, and there is a state $q$ such that either \textbf{(1)} $q \in Q$ and $q \xrightarrow{x}_{E'} q_0$ for $x \in \Gamma$ or \textbf{(2)} $q \in I$ and $q \Xrightarrow[\varepsilon]_{E'} q_0$, then we add an edge $q \xrightarrow{\varepsilon} q_1$ to $E'$.
	\end{description}
	We call $( E'_i )_i$ the finite, growing sequence of edges created during the saturation procedure.
	
	Our intuition behind the above construction is to create a new automaton that uses the states of the sequence automaton but accepts upper stack words instead: an upper stack word $w$ is accepted by $\mathcal{A}_U$ with the path $q_i \Xrightarrow[w]_{E'} q$,  where $q_i \in I$ if and only if $\mathcal{A}_T$ accepts a sequence $t$ with the path $q_i \Xrightarrow[t]_{E} q$ where $t$ ends in state $p$ and $\upsilon ( t, c_\$ ) = w$. This property is preserved at every step of the saturation procedure.
	
	Indeed, consider a sequence $t$ and $w = \upsilon ( t, c_\$ )$. Suppose that $t$ and $w$ satisfy the property above: there is a path $q_i \Xrightarrow[t]_{E} q_0$ in $\mathcal{A}_T$ and a path $q_i \Xrightarrow[w]_{E'} q_0$ in $\mathcal{A}_U$. Let $q_0 \xrightarrow{\delta}_E q_1$ be a transition of $\mathcal{A}_T$, $q_1 \in Q$ and $\delta \in \Delta$. $t \delta$ is a sequence of transitions in $T$ with a labelled path $q_i \Xrightarrow[t \delta]_{E} q_1$ in $\mathcal{A}_T$, and in order to satisfy the above property, a path $q_i \Xrightarrow[w']_{E'} q_1$ labelled by $w' = \upsilon ( t \delta, c_\$ )$ should exist in $\mathcal{A}_U$ as well.
	
	If $\delta \in \Delta_{pop}$, then $w' = \upsilon ( t \delta, c_\$ ) = w a$, where $a \in \Gamma$. Rule $( S_{pop} )$ creates an edge $q_0 \xrightarrow{a} q_1$ in $\mathcal{A}_U$ such that there is a path $q_i \Xrightarrow[w]_{E'} q_0 \xrightarrow{a}_{E'} q_1$ labelled by $w'$.
	
	If $\delta \in \Delta_{switch}$, then $w' = \upsilon ( t \delta, c_\$ ) = w$. Rule $( S_{switch} )$ creates an edge $q_0 \xrightarrow{\varepsilon} q_1$ in $\mathcal{A}_U$ such that there is a path $q_i \Xrightarrow[w]_{E'} q_0 \xrightarrow{\varepsilon}_{E'} q_1$ labelled by $w'$.
	
	If $\delta \in \Delta_{push}$ and $w = w_0 x$, where $x \in \Gamma$, then $w' = \upsilon ( t \delta, c_\$ ) = w_0$ and for every state $q \in Q$ such that $q_i \Xrightarrow[w_0]_{E'} q \xrightarrow{x}_{E'} q_0$, $( S_{push} )$ adds an edge $q \xrightarrow{\varepsilon} q_1$ to $\mathcal{A}_U$ such that there is a path $q_i \Xrightarrow[w_0]_{E'} q \xrightarrow{\varepsilon}_{E'} q_1$ labelled by $w'$.
	
	Following this intuition, we can prove this lemma:
	
	\begin{lemma}
		\label{lm_upp_ reg_1}
		For every sequence $t$ such that $\exists q_i \in I$, $\exists q \in Q_p$, $q_i \Xrightarrow[t]_E q$, then there exists a path $q_i \Xrightarrow[w]_{E'} q$ in $\mathcal{A}_U$ such that $\upsilon ( t, c_\$ ) = w$.
	\end{lemma}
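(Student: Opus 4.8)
The plan is to prove the statement by induction on the length $n$ of the sequence $t$, carrying the invariant that whenever $q_i \Xrightarrow[t]_E q$ with $q_i \in I$, the automaton $\mathcal{A}_U$ admits a path $q_i \Xrightarrow[w]_{E'} q$ reading $w = \upsilon ( t, c_\$ )$. In the base case $t = \varepsilon$ the only available path of $\mathcal{A}_T$ is the trivial one (taking $\mathcal{A}_T$ without $\varepsilon$-moves, as we may), so $q = q_i$, and since the upper stack of $c_\$ = \langle p_\$, \varepsilon, \$ \rangle$ is empty we have $\upsilon ( \varepsilon, c_\$ ) = \varepsilon$, which is read by the empty path of $\mathcal{A}_U$. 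For the inductive step I would factor $t = t' \delta$ with $q_i \Xrightarrow[t']_E q_0 \xrightarrow{\delta}_E q$, apply the induction hypothesis to $t'$ to obtain a path $q_i \Xrightarrow[w']_{E'} q_0$ with $w' = \upsilon ( t', c_\$ )$, and then prolong this path according to the type of $\delta$, reusing verbatim the step-by-step case analysis already carried out before the lemma.

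The switch and pop cases are immediate. If $\delta \in \Delta_{switch}$ then $\upsilon ( t, c_\$ ) = w'$, and the edge $q_0 \xrightarrow{\varepsilon}_{E'} q$ supplied by $( S_{switch} )$ extends the path without reading a new letter. If $\delta \in \Delta_{pop}$ pops a letter $a$, then $\upsilon ( t, c_\$ ) = w' a$, and the edge $q_0 \xrightarrow{a}_{E'} q$ supplied by $( S_{pop} )$ extends the path by exactly $a$. The only genuinely delicate case is the push, $\delta \in \Delta_{push}$. When $w' = \varepsilon$ we also have $\upsilon ( t, c_\$ ) = \varepsilon$; here $q_i \in I$ and $q_i \Xrightarrow[\varepsilon]_{E'} q_0$, so clause (2) of $( S_{push} )$ fires and adds the edge $q_i \xrightarrow{\varepsilon}_{E'} q$, keeping a path that still reads $\varepsilon$. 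When $w' = w_0 x$ with $x \in \Gamma$, we have $\upsilon ( t, c_\$ ) = w_0$, and the aim is to split the path reading $w'$ as $q_i \Xrightarrow[w_0]_{E'} r \xrightarrow{x}_{E'} q_0$ so that clause (1) of $( S_{push} )$, applied to the state $r$, installs an edge $r \xrightarrow{\varepsilon}_{E'} q$ and yields the desired path $q_i \Xrightarrow[w_0]_{E'} r \xrightarrow{\varepsilon}_{E'} q$ reading $w_0$.

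I expect the main obstacle to be justifying this last splitting, namely that the path reading $w_0 x$ into $q_0$ can be taken to end with a letter-edge reading $x$ that lands on the exact state $q_0$ required by clause (1) of $( S_{push} )$. The difficulty is that $( S_{switch} )$ and $( S_{push} )$ both create $\varepsilon$-edges, so a priori the final $x$-edge of the path may reach some state $r'$ from which $q_0$ is only reachable through a chain of such $\varepsilon$-edges, in which case clause (1) does not apply directly at $q_0$. To close this gap I would strengthen the induction hypothesis into a statement about the \emph{shape} of the retained path---requiring, when $\upsilon ( t, c_\$ )$ ends in a letter $x$, that the last letter-reading edge of the path reads $x$ and be positioned so that the saturation has recorded the appropriate $\varepsilon$-shortcut---or, equivalently, prove clause (1) up to the $\varepsilon$-closure induced by the switch and push edges. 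Once this invariant on the right end of the path is in place, the three cases go through mechanically, and the remainder is routine bookkeeping over the growing edge sequence $( E'_i )_i$ produced by the saturation.
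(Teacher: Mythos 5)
You follow exactly the paper's own argument---the induction on the length of $t$ with a case split on the last rule is precisely the ``intuition'' paragraph preceding the lemma---and your base, switch, pop, and empty-upper-stack push cases are correct. But the obstacle you flag in the remaining push case is a genuine gap that your proposal does not close, and in fact no proof can close it: with the saturation rules read as written (clause \textbf{(1)} of $( S_{push} )$ demands a \emph{single} letter-labelled edge of $E'$ ending exactly at $q_0$), the lemma is false. Take $c_\$ = \langle p_\$, \varepsilon, \$ \rangle$ and the rules $\delta_1 = ( p_\$, \$ ) \rightarrow ( p_1, a\$ )$, $\delta_2 = ( p_1, a ) \rightarrow ( p_2, \varepsilon )$, $\delta_3 = ( p_2, \$ ) \rightarrow ( p_3, \$ )$, $\delta_4 = ( p_3, \$ ) \rightarrow ( p_4, b\$ )$; let $T$ be the prefix closure of $\lbrace \delta_1 \delta_2 \delta_3 \delta_4 \rbrace$ and $\mathcal{A}_T$ the line $s_0 \xrightarrow{\delta_1}_E s_1 \xrightarrow{\delta_2}_E s_2 \xrightarrow{\delta_3}_E s_3 \xrightarrow{\delta_4}_E s_4$ with $I = \lbrace s_0 \rbrace$, all states final. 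Saturation adds $s_0 \xrightarrow{\varepsilon} s_1$ (clause (2) on $\delta_1$), $s_1 \xrightarrow{a} s_2$ (rule $( S_{pop} )$ on $\delta_2$), $s_2 \xrightarrow{\varepsilon} s_3$ (rule $( S_{switch} )$ on $\delta_3$), and then halts: for the push edge $s_3 \xrightarrow{\delta_4}_E s_4$, clause (1) fails because the only edge into $s_3$ is the $\varepsilon$-edge from $s_2$, and clause (2) fails because the only states $\varepsilon$-reachable from $s_0$ are $s_0$ and $s_1$. So $s_4$ has no incoming edge at all in $\mathcal{A}_U$, yet $\upsilon ( \delta_1 \delta_2 \delta_3 \delta_4, c_\$ ) = \varepsilon$ (the push $\delta_4$ erases the $a$ deposited by the pop $\delta_2$): the path required by the lemma does not exist. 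This is exactly your ``letter edge followed by $\varepsilon$-edges'' scenario, realized by a pop, then a switch, then a push.

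This shows why neither of your proposed repairs can work as a proof step. The strengthened induction hypothesis on the shape of the retained path is unattainable: after a switch step the retained path necessarily ends with an $\varepsilon$-edge, and in the example above \emph{no} path reading $a$ into $s_3$ ends with a letter edge, so there is nothing to strengthen towards. And ``proving clause (1) up to $\varepsilon$-closure'' is not something one can prove: clause (1) is part of the definition of the saturation, and treating it up to $\varepsilon$-closure means replacing the premise $q \xrightarrow{x}_{E'} q_0$ by $q \Xrightarrow[x]_{E'} q_0$, i.e., amending the construction. (With that amendment your induction does go through---split the path reading $w_0 x$ just before its last letter edge---and the lemma becomes true.) To be fair, the paper's own argument has the same hole: its push case silently assumes the decomposition $q_i \Xrightarrow[w_0]_{E'} q \xrightarrow{x}_{E'} q_0$, which is exactly what fails here. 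So your diagnosis of where the difficulty sits is sharper than the paper's, but as a proof of the lemma for the construction as defined, your proposal is incomplete---and necessarily so.
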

	
	On the other hand, we must prove this lemma to get the full equivalence:
	
	\begin{lemma}
		\label{lm_upp_ reg_2}
		At any step $i$ of the saturation procedure, if $q_i \Xrightarrow[w]_{E'_i} q$ where $q_i \in I$, then there exists a sequence of transitions $t$ in $T$ such that $q_i \Xrightarrow[t]_E q$ and $\upsilon ( t, c_\$ ) = w$. 
	\end{lemma}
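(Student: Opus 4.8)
The plan is to prove this converse of Lemma~\ref{lm_upp_ reg_1} by induction on the saturation step $i$, with, at each step, an inner induction on the number of edges of the path considered. As a preliminary normalisation I would assume without loss of generality that no edge of $E$ ends in an initial state of $I$: this leaves $T = \mathcal{L}(\mathcal{A}_T)$ unchanged and is compatible with the assumptions $F = Q$ and meaningfulness. Since none of the rules $(S_{pop})$, $(S_{switch})$, $(S_{push})$ ever creates an edge of $\mathcal{A}_U$ entering a state of $I$, this normalisation guarantees that any path of $\mathcal{A}_U$ issued from an initial state meets $I$ only at its first vertex; this is exactly what will tame the empty-upper-stack branch of $(S_{push})$.

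For the base case $i = 0$ we have $E'_0 = \emptyset$, so the only paths are empty and the empty trace $\varepsilon \in T$ (by prefix-closure) witnesses the claim, as $\upsilon(\varepsilon, c_\$) = \varepsilon$. For the inductive step I assume the statement at step $i-1$ and fix a path $q_i \Xrightarrow[w]_{E'_i} q$ with $q_i \in I$. I peel off its last edge $f$, writing the path as $\rho' \cdot f$ where $\rho'$ is a strictly shorter path of $E'_i$ from $q_i$ to the source $r$ of $f$ reading some prefix $w'$ of $w$. The inner induction hypothesis provides a trace $t' \in T$ with $q_i \Xrightarrow[t']_E r$ and $\upsilon(t', c_\$) = w'$, and it only remains to extend $t'$ across $f$ according to the rule that created it.

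This case analysis on $f$ is the core of the argument. If $f$ stems from $(S_{pop})$ or $(S_{switch})$, it is backed by a pop or switch transition $r \xrightarrow{\delta}_E q$, and $t = t'\delta$ works: it stays in $T$ because $F = Q$ and realises $\upsilon(t, c_\$) = w$. The delicate rule is $(S_{push})$, where $f = r \xrightarrow{\varepsilon} q$ is justified by a push transition $q_0 \xrightarrow{\delta}_E q$ together with either \textbf{(1)} a letter-edge $r \xrightarrow{x}_{E'} q_0$, or \textbf{(2)} $r \in I$ and $r \Xrightarrow[\varepsilon]_{E'} q_0$. In case~(1), a letter-labelled edge of $\mathcal{A}_U$ can only have been produced by $(S_{pop})$, hence there is a genuine pop transition $r \xrightarrow{\delta'}_E q_0$; I take $t = t'\delta'\delta$, so that $q_i \Xrightarrow[t']_E r \xrightarrow{\delta'}_E q_0 \xrightarrow{\delta}_E q$ is a valid path and the combined effect on the upper stack is to append $x$ and then delete it, giving $\upsilon(t, c_\$) = w' = w$. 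In case~(2), the normalisation forces $\rho'$ to be empty, so $r = q_i$ and $w' = \varepsilon$; the $\varepsilon$-path $q_i \Xrightarrow[\varepsilon]_{E'_{i-1}} q_0$ lives in the previous stage, the outer induction hypothesis yields a trace $t_0$ with $q_i \Xrightarrow[t_0]_E q_0$ and $\upsilon(t_0, c_\$) = \varepsilon$, and $t = t_0\delta$ concludes, the push acting on an empty upper stack and leaving it empty.

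I expect the main obstacle to be precisely the feedback created by $(S_{push})$: unlike the other two rules it inspects the partially built $\mathcal{A}_U$ rather than $\mathcal{A}_T$, so each of its $\varepsilon$-shortcuts must be re-expanded into the transitions it abbreviates. The two points I would verify with care are that letter-labelled edges are always single pop-edges, which makes the re-expansion of case~(1) unambiguous and of net upper-stack effect $\varepsilon$ independently of the current content, and that case~(2) can fire only on an empty upper stack, which is exactly what the absence of edges into $I$ secures. Together with Lemma~\ref{lm_upp_ reg_1}, this lemma gives the two-way correspondence between paths of $\mathcal{A}_U$ from $I$ and traces of $T$, from which the regularity and effectiveness stated in Theorem~\ref{theo_traces_reg_post} follow.
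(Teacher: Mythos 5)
Your proof is correct, and its skeleton is the same as the paper's: induct over the saturation procedure and re-expand each edge of $\mathcal{A}_U$ into the transitions that justify it --- the pop transition behind an $( S_{pop} )$ edge, the pop-then-push pair behind a case \textbf{(1)} $( S_{push} )$ edge, and the justifying $\varepsilon$-path followed by the push behind a case \textbf{(2)} edge. Two points distinguish your write-up, both to its advantage. First, your inner induction on path length is more careful than the paper's induction step, which only considers paths whose \emph{last} edge is the newly added one and whose prefix lies in $E'_i$; a path in $E'_{i+1}$ may use the new edge several times or in the middle, and your lexicographic scheme (saturation step, then number of edges) handles this cleanly.

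Second, and more importantly, your preliminary normalisation --- no edge of $E$, hence none of $E'$, enters a state of $I$ --- has no counterpart in the paper, which treats case \textbf{(1)} of $( S_{push} )$ in detail and dismisses case \textbf{(2)} as ``similar''. It is not similar: the edge created in case \textbf{(2)} starts in an initial state and is sound only if it can occur at the very beginning of a path, which is exactly what your normalisation enforces. Without it the lemma is actually false. Take the pops $\delta_1 = ( p_\$, \$ ) \rightarrow ( p, \varepsilon )$, $\delta_2 = ( p, a ) \rightarrow ( p_\$, \varepsilon )$ and the push $\delta_3 = ( p_\$, b ) \rightarrow ( p'', cd )$, and let $\mathcal{A}_T$ have initial state $q_a$ and edges $q_a \xrightarrow{\delta_1}_E q_b$, $q_b \xrightarrow{\delta_2}_E q_a$, $q_a \xrightarrow{\delta_3}_E q_c$; its language consists of the prefixes of the sequences $( \delta_1 \delta_2 )^n \delta_3$, which are meaningful, and it is prefix-closed. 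Saturation adds $q_a \xrightarrow{\$} q_b$ and $q_b \xrightarrow{a} q_a$ by $( S_{pop} )$, and then, by case \textbf{(2)} applied with the empty $\varepsilon$-path, the edge $q_a \xrightarrow{\varepsilon} q_c$. The path $q_a \xrightarrow{\$}_{E'} q_b \xrightarrow{a}_{E'} q_a \xrightarrow{\varepsilon}_{E'} q_c$ reads $\$ a$, yet every sequence of $T$ ending at $q_c$ is some $( \delta_1 \delta_2 )^n \delta_3$, whose upper-stack word is $\varepsilon$ or $( \$ a )^{n-1} \$$, never $\$ a$. So your normalisation (which is indeed without loss of generality, and is the analogue of the ``no edge ends in $i_p$'' assumption the paper does make in the proof of Theorem \ref{theo_upds_post_simple}) is not a cosmetic precaution but a repair the construction genuinely needs; keep it explicit.
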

	
	\begin{proof}
		We prove this lemma by induction on the saturation step $i$:
		\begin{description}
			\item[Basis:] $E'_0 = \emptyset$ and the lemma holds.
			
			\item[Induction step:] Let $q_1 \Xrightarrow[x]_{E'_{i+1}} q_2$ be the $i + 1$-th transition added to $E'$. Let $w' = w x$ be such that there is a path  $q_i \Xrightarrow[w]_{E'_i} q_1 \xrightarrow{x}_{E'_{i+1}} q_2$. By induction hypothesis, there is a sequence $t \in T$ such that $q_i \Xrightarrow[t]_{E} q_1$ and $\upsilon ( t, c_\$ ) = w$.
			
			If $x \in \Gamma$, then there is a rule $\delta \in \Delta_{pop}$ popping $x$ from the stack such that $q_1 \xrightarrow{\delta}_{E} q_2$ by definition of the saturation rules. We have $\upsilon ( t \delta, c_\$ ) = w x = w'$ and the lemma holds at the $i + 1$-th step.
			
			If $x = \varepsilon$, then the rule $\delta$ spawning $q_1 \Xrightarrow[x]_{E'_{i+1}} q_2$ is either a switch or push rule. The switch case being similar to the pop case, we will consider that $\delta \in \Delta_{push}$.
			
			In the first case of the push saturation rule, there exists by definition a state $q_0$ such that $q_0 \xrightarrow{\delta}_E q_2$ and $y \in \Gamma$ such that $q_1 \xrightarrow{y}_{E'_i} q_0$. Hence, there is $\delta' \in \Delta_{pop}$ popping $y$ from the stack such that $q_1 \Xrightarrow[\delta']_{E} q_0$. If we consider the sequence $q_i \Xrightarrow[t]_{E} q_1 \xrightarrow{\delta'}_{E} q_0 \xrightarrow{\delta}_E q_2$, then $\upsilon ( t \delta' \delta, c_\$ ) = w = w'$ and the lemma holds at the $i + 1$-th step. The second case of the push saturation rule is similar.
		\end{description}
	\end{proof}
	
	Let $\mathcal{L}_p ( \mathcal{A}_U ) = \lbrace w \mid \exists q_i \in I, \exists f \in Q_p, i \Xrightarrow[w]_{E'} f \rbrace$ be the set of paths in $\mathcal{A}_U$ ending in a final node related to a state $p$ of $\mathcal{P}$. By Lemmas \ref{lm_upp_ reg_1} and \ref{lm_upp_ reg_2}, $\mathcal{U} ( R ) = \lbrace \langle p, w_u \rangle \mid w_u \in \mathcal{L}_p ( \mathcal{A}_U ) \rbrace$. Since the languages $\mathcal{L}_p$ are regular and there is a finite number of them, $\mathcal{U} ( T )$ is regular as well and can be computed using $\mathcal{A}_U$.

	\subsection{Computing an over-approximation}
	
	The set of traces of a UPDS $\mathcal{P} = ( P, \Gamma, \Delta )$ from a regular set of configurations $\mathcal{C}$ is not always regular. By Lemma \ref{lm_runs_pds_upds}, traces of $\mathcal{P}$ are the same for the UPDS and PDS semantics. Thus, we can apply methods originally designed for PDSs to over-approximate traces of a UPDS in a regular fashion, as shown in \cite{BS-90, PW-acl91}.
	
	With one of these methods, we can therefore compute a regular over approx\~imation $\mathcal{T} ( \mathcal{P}, \mathcal{C} )$ of the set of traces of $\mathcal{P}$ from $\mathcal{C}$. Using the saturation procedure underlying Theorem \ref{theo_traces_reg_post}, we can then compute the set $\mathcal{U} ( \mathcal{T} ( \mathcal{P}, \mathcal{C} ) )$ of upper stack configurations reachable using over-approximated traces of $\mathcal{P}$, hence, an over-approximation of the actual set of reachable upper stack configurations.
	
	However, we still lack the lower stack component of the reachability set. As shown in \cite{EHRS-cav00}, $post^*_{\text{\tiny{PDS}}} ( \mathcal{P}, \mathcal{C} )$ is regular and computable, and we can determine the exact set of reachable lower stack configurations.
	
	We define the set $O = \lbrace \langle p, w_u, w_t \rangle \mid \langle p, w_u \rangle \in \mathcal{T} ( \mathcal{R} ( \mathcal{P},\mathcal{C} ) ), \langle p, w_t \rangle \in post^*_{\text{\tiny{PDS}}} \linebreak ( \mathcal{P}, \mathcal{C} ) \rbrace$. $O$ is a regular over-approximation of $post^* ( \mathcal{P}, \mathcal{C} )$.

	\section{Applications}
	
	The UPDS model can be used to detect stack behaviours that cannot be found using a simple pushdown system. In this section, we present three such examples.

	\subsection{Stack overflow detection}
	
	A stack overflow is a programming malfunction occurring when the call stack pointer exceeds the stack bound. In order to analyze a program's vulnerability to stack overflow errors, we compute its representation as a UPDS $\mathcal{P} = ( P, \Gamma, \Delta )$, using the control flow model outlined in \cite{EHRS-cav00}.
	
	Let $\mathcal{C} = P \times \top \#^m \times L$ be the set of starting configurations, where $\top \in \Gamma$ is a top stack symbol that does not appear in any rule in $\Delta$, $\# \in \Gamma$ a filler symbol, $m$ an integer depending on the maximal size of the stack, and $L$ a regular language of lower stack initial words. Overwriting the top symbol would represent a stack overflow misfunction. Since there is no such thing as an upper stack in a simple pushdown automaton, we need a UPDS to detect this error, as shown in Figure \ref{fig:3-ex1}.
	
	\begin{figure}[!hth]
		\centering
		\begin{tikzpicture}
		\edef\sizetape{0.45cm}
		\tikzstyle{tmtape}=[draw,minimum size=1.5*\sizetape]
		
		\begin{scope}[start chain=1 going right,node distance=0mm, outer sep=0mm]
		\node [on chain=1,tmtape] (top) {$\top$};
		\node [on chain=1,tmtape]{$\#$};
		\node [on chain=1,tmtape,draw=none]{$\overset{m \text{ times}}{\ldots}$};
		\node [on chain=1,tmtape] (bottom) {$\#$};
		\node [on chain=1,tmtape,right=1.5mm of bottom] {$a$};
		\node [on chain=1,tmtape]{$b$};
		\node [on chain=1,tmtape,draw=none]{$\ldots$};
		\end{scope}
		\end{tikzpicture}
		\captionof{figure}{Using $\top$ to bound the upper stack.}
		\label{fig:3-ex1}
	\end{figure}
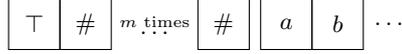
	
	Let $\mathcal{X} = P \times ( \Gamma \backslash \lbrace \top \rbrace )^* \times \Gamma^*$ be the set of forbidden configurations where the top stack symbol has been overwritten. If the intersection of the under-approximation $\mathcal{U}$ of $pre^* ( \mathcal{X} )$ with $\mathcal{C}$ is not empty, then a stack overflow does happen in the program. On the other hand, if the intersection of the over-approximation $\mathcal{O}$ of $post^* ( \mathcal{C} )$ with the set $\mathcal{X}$ of forbidden configurations is empty, then we are sure that a stack overflow will not happen in the program

	\subsection{Reading the upper stack}
	
	Let us consider the piece of code \ref{asm1}. In line 1, the bottom symbol of the upper stack $\mathsf{sp - 4}$, just above the stack pointer, is copied into the register \textsf{eax}. In line 2, the content of \textsf{eax} is compared to a given value $a$. In line 3, if the two values are not equal, the program jumps to an error state \textsf{err}.
	
	\begin{lstlisting}[frame=single, caption=Reading the upper stack, label=asm1]
	mov eax, [sp - 8]
	cmp eax, a
	je err
	\end{lstlisting}
	
	Using a simple PDS model, it is not possible to know what is being read. However, our UPDS model and the previous algorithms provide us with reasonable approximations which can be used to examine possible values stored in $\textsf{eax}$, as shown in Figure \ref{fig:3-ex2}.
	
	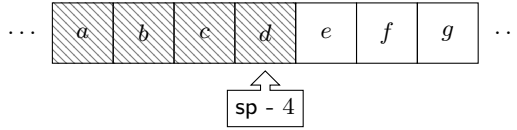
\begin{figure}[!hth]
		\centering
		\begin{minipage}{.8\linewidth}
			\centering
			\begin{tikzpicture}
			\edef\sizetape{0.8cm}
			\tikzstyle{tmtape}=[draw,minimum size=\sizetape]
			\tikzstyle{tmhead}=[arrow box,draw,minimum size=.45cm,arrow box
			arrows={north:0.25cm}]
			
			\begin{scope}[start chain=1 going right,node distance=0mm, outer sep=0mm]
			\node [on chain=1,tmtape,draw=none] {$\ldots$};
			\node [on chain=1,tmtape,pattern=north west lines, pattern color=gray] {$a$};
			\node [on chain=1,tmtape,pattern=north west lines, pattern color=gray]{$b$};
			\node [on chain=1,tmtape,pattern=north west lines, pattern color=gray]{$c$};
			\node [on chain=1,tmtape,pattern=north west lines, pattern color=gray] (input) {$d$};
			\node [on chain=1,tmtape]{$e$};
			\node [on chain=1,tmtape] {$f$};
			\node [on chain=1,tmtape] {$g$};
			\node [on chain=1,tmtape,draw=none] {$\ldots$};
			\end{scope}
			
			\node [tmhead,yshift=-0.60cm] at (input.south){\textsf{sp} - 4};
			\end{tikzpicture}
			\captionof{figure}{The stack being read.}
			\label{fig:3-ex2}
		\end{minipage}
	\end{figure}
	
	To check whether this program reaches the error state $err$ or not, we define the regular set $\mathcal{X} = P \times \Gamma^* a \times \Gamma^*$ of forbidden configurations where $a$ is present on the upper stack just above the stack pointer. If the intersection of the under-approximation of $pre^* ( \mathcal{X} )$ with the set of starting configurations $\mathcal{C}$ of the program is not empty, then \textsf{eax} can contain a critical value, and the program is unsafe. On the other hand, if the intersection of the over-approximation of $post^* ( \mathcal{C} )$ with the set $\mathcal{X}$ is empty, then the program can be considered safe.

	\subsection{Changing the stack pointer}
	
	Another malicious use of the stack pointer \textsf{sp} would be to change the starting point of the stack. As an example, the instruction \textsf{mov sp, sp - 12} changes the stack pointer in such a manner that, from the configuration of Figure \ref{fig:3-ex3a}, the top three elements above it now belong to the stack, as shown in Figure \ref{fig:3-ex3b}.
	
	\begin{figure}[!hth]
		\centering
		\begin{minipage}{.8\linewidth}
			\centering
			\begin{tikzpicture}
			\edef\sizetape{0.8cm}
			\tikzstyle{tmtape}=[draw,minimum size=\sizetape]
			\tikzstyle{tmhead}=[arrow box,draw,minimum size=.45cm,arrow box
			arrows={north:0.25cm}]
			
			\begin{scope}[start chain=1 going right,node distance=0mm, outer sep=0mm]
			\node [on chain=1,tmtape,draw=none] {$\ldots$};
			\node [on chain=1,tmtape,pattern=north west lines, pattern color=gray] {$a$};
			\node [on chain=1,tmtape,pattern=north west lines, pattern color=gray] {$b$};
			\node [on chain=1,tmtape,pattern=north west lines, pattern color=gray] {$c$};
			\node [on chain=1,tmtape,pattern=north west lines, pattern color=gray] {$d$};
			\node [on chain=1,tmtape] (input) {$e$};
			\node [on chain=1,tmtape] {$f$};
			\node [on chain=1,tmtape] {$g$};
			\node [on chain=1,tmtape,draw=none] {$\ldots$};
			\end{scope}
			
			\node [tmhead,yshift=-.55cm] at (input.south){\textsf{sp}};
			\end{tikzpicture}
			\captionof{figure}{The original stack.}
			\label{fig:3-ex3a}
		\end{minipage}
	\end{figure}
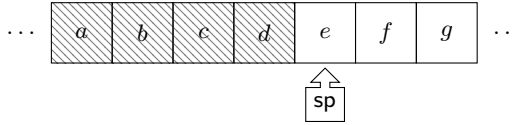
	
	\begin{figure}[!hth]
		\centering
		\begin{minipage}{.8\linewidth}
			\centering
			\begin{tikzpicture}
			\edef\sizetape{0.8cm}
			\tikzstyle{tmtape}=[draw,minimum size=\sizetape]
			\tikzstyle{tmhead}=[arrow box,draw,minimum size=.45cm,arrow box
			arrows={north:0.25cm}]
			
			\begin{scope}[start chain=1 going right,node distance=0mm, outer sep=0mm]
			\node [on chain=1,tmtape,draw=none] {$\ldots$};
			\node [on chain=1,tmtape,pattern=north west lines, pattern color=gray] {$a$};
			\node [on chain=1,tmtape] (input) {$b$};
			\node [on chain=1,tmtape] {$c$};
			\node [on chain=1,tmtape] {$d$};
			\node [on chain=1,tmtape] {$e$};
			\node [on chain=1,tmtape] {$f$};
			\node [on chain=1,tmtape] {$g$};
			\node [on chain=1,tmtape,draw=none] {$\ldots$};
			\end{scope}
			
			\node [tmhead,yshift=-0.55cm] at (input.south){\textsf{sp}};
			\end{tikzpicture}
			\captionof{figure}{After changing \textsf{sp}.}
			\label{fig:3-ex3b}
		\end{minipage}
	\end{figure}
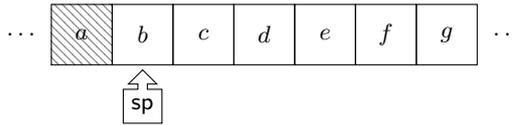
	
	If we model a program as a UPDS, then using our previous algorithms to compute approximations of the reachability set would allow us to have an approximation of the content of the new stack after the stack pointer change.

	\section{Related work}
	
	One way to improve the expressiveness of pushdown automata is to change the way transition rules interact with the stack. Ginsburg et al. introduced in \cite{GGH-1967} \emph{stack automata} that can read the inside of their own stack using a moving stack pointer but can only modify the top. As shown in \cite{hopcroft-1968}, stack automata are equivalent to linear bounded automata (LBA). A LBA is a non-deterministic Turing machine whose tape is bounded between two end markers that cannot be overwritten. This model cannot simulate a UPDS whose lower stack is of unbounded height.
	
	Uezato et al. defined in \cite{UM-atva13} \emph{pushdown systems with transductions}: in such a model, a finite transducer is applied to the whole stack after each transition. However, this model is Turing powerful unless the transducers used have a finite closure, in which case it is equivalent to a simple pushdown system. When the set of transducers has a finite closure, this class cannot be used to simulate UPDSs.
	
	\emph{Multi-stack automata} have two or more stacks that can be read and modified, but are unfortunately Turing powerful. Following the work of Qadeer et al. in \cite{QR-tacas05}, La Torre et al. introduced in \cite{TMP-lics07} \emph{multi-stack pushdown systems with bounded phases}: in each phase of a run, there is at most one stack that is popped from. Anil Seth later proved in \cite{seth-cav10} that the $pre^*$ of a regular set of configurations of a multi-pushdown system with bounded phases is regular; we use this result to perform a bounded phase analysis of our model.
	
	\emph{2-visibly pushdown automata} (2-VPDA) were defined by Carotenuto et all. in \cite{CMP-dlt07} as a variant of two-stack automata where the stack operations are driven by the input word. However, an ordering constraint on the stacks that prevent a 2-VPDA from simulating a UPDS has to be added in order to solve the emptiness problem or the model-checking problem.

	\section{Conclusion}
	
	The first contribution of this paper is a more precise pushdown model of the stack of a program as defined in Section 1. We then investigate the sets of predecessors and successors of a regular set of configurations of an UPDS. Unfortunately, we prove that neither of them are regular. However, we show that the set of successors is context-sensitive. As a consequence, we can decide whether a single configuration is forward reachable or not in an UPDS.
	
	We then prove that the set of predecessors of an UPDS is regular given a limit of $k$ phases, where a phase is a part of a run during which either pop or push rules are forbidden. Bounded-phase reachability is an under-approximation of the actual reachability relation on UPDSs that we can use to detect some incorrect behaviours.
	
	We also give an algorithm to compute an over-approximation of the set of successors. Our idea is to first over-approximate the runs of the UPDS, then compute an over-approximation of the reachable upper stack configurations from this abstraction of runs and consider its product with the regular, accurate and computable set of lower stack configurations.
	
	Finally, we use these approximations on programs to detect stack overflow errors as well as malicious attacks that rely on stack pointer manipulations.

	\bibliography{references_corrected}

\end{document}